\documentclass[a4paper,UKenglish,cleveref, autoref, thm-restate]{lipics-v2021}
\usepackage{amsmath, amssymb, amsthm, amsfonts}
\usepackage{tabularx}
\usepackage[dvipsnames]{xcolor}
\usepackage{hyperref}
\usepackage{enumitem}
\usepackage{algorithmicx}
\usepackage{algorithm}
\usepackage[noend]{algpseudocode}
\usepackage{booktabs}
\usepackage{tikz}
\usetikzlibrary{arrows.meta, positioning, quotes}

\usepackage{cite}
\usepackage{graphicx}
\usepackage{textcomp}
\usepackage{xcolor}

\title{Efficient Prime Paths Generation} 

\author{Jakub Zelek}{Jagiellonian University, Cracow, Poland }{jakub.zelek@doctoral.uj.edu.pl}{https://orcid.org/0000-0002-1825-0097}{}
\author{Jakub Ruszil}{Jagiellonian University, Cracow, Poland }{jakub.ruszil@uj.edu.pl}{https://orcid.org/0000-0002-1825-0097}{}
\author{Adam Roman}{Jagiellonian University, Cracow, Poland}{adam.roman@uj.edu.pl}{https://orcid.org/0000-0002-1825-0097}{}
\author{Artur Polański}{Jagiellonian University, Cracow, Poland }{artur.polanski@uj.edu.pl}{https://orcid.org/0000-0002-1825-0097}{}
\authorrunning{J. Ruszil, J. Zelek, A. Roman, A. Polański} 

\Copyright{J.Zelek, J. Ruszil, A. Roman, A. Polański} 

\ccsdesc[100]{Mathematics of computing -> Discrete mathematics -> Graph theory -Graph algorithms} 

\keywords{Graph algorithms, enumeration algorithms, software testing} 

\category{} 

\relatedversion{} 

\supplement{\url{https://github.com/JakubZelek/Diblob/tree/prime-path-algorithm-comparision/notebooks/experiments/prime_paths_papier}}

\nolinenumbers 

\EventEditors{}
\EventNoEds{2}
\EventLongTitle{42nd Conference on Very Important Topics (CVIT 2016)}
\EventShortTitle{CVIT 2016}
\EventAcronym{CVIT}
\EventYear{2016}
\EventDate{December 24--27, 2016}
\EventLocation{Little Whinging, United Kingdom}
\EventLogo{}
\SeriesVolume{42}
\ArticleNo{23}

\begin{document}

\maketitle

\begin{abstract}
Prime path coverage is a powerful structural testing criterion, but generating all prime paths in a directed graph remains computationally challenging due to the potentially exponential number of them. Existing approaches typically rely on enumerating large sets of candidate paths and filtering them, leading to high computational and memory overhead.

In this paper, we present a new approach to prime path generation based on a structural characterization of prime paths in terms of strongly connected components. This characterization yields non-trivial necessary conditions for valid path endpoints and reduces the problem to constrained cycle enumeration in an augmented graph. As a result, we avoid explicitly enumerating all simple paths and instead generate only feasible candidates.

Building on this insight, we design a streaming algorithm that outputs prime paths incrementally, using a Johnson-style traversal as a subroutine within a significantly reduced search space. The algorithm exploits SCC boundary crossings as natural pruning checkpoints — discarding partial paths the moment they are detected to be backward extendable, eliminating entire subtrees of the search space during traversal rather than filtering completed paths post hoc.

We implement our method and evaluate it on a large dataset of real-world control-flow graphs extracted from open-source C++ and Python projects. The results demonstrate that our approach consistently outperforms existing methods, while maintaining stable inter-output delay in practice.

\end{abstract}
\section{Introduction} \label{secIntro}
Path-based coverage criteria, such as Prime Path Coverage (PPC), play an important role in white-box testing for several compelling reasons. First, they are strong criteria because they subsume many well-established and rigorous coverage criteria. For instance, PPC subsumes All-DU-Paths Coverage, Edge-Pair Coverage, and Complete Round-Trip Coverage \cite{Rechtberger2022, ammann2008}. As a result, satisfying PPC implicitly ensures satisfaction of these weaker criteria. Second, path-based criteria capture long dependency chains that are often missed by more localized coverage measures. Because prime paths are not extendable, they tend to represent meaningful execution flows, including business-critical sequences and extended control dependencies. This makes PPC particularly effective at revealing faults associated with complex interactions and long-range program behavior. Third, criteria such as PPC strike a practical balance between thoroughness and feasibility. While exhaustive path testing is infeasible for most real-world programs due to the potentially infinite number of execution paths, PPC focuses on a finite yet critical subset of paths, thereby providing strong fault-detection capability without incurring prohibitive testing costs. Finally, prime paths can be viewed as the fundamental building blocks of program execution. Analogous to the Prime Factorization Theorem (which states that every natural number greater than one can be uniquely expressed as a product of prime numbers), every execution path in a program can be seen as a composition of prime paths. In this sense, prime paths form the essential elements from which all program control-flow behaviors are constructed.

Prime paths, intuitively, are simple cycles (cycles without repeating vertices but the first) or simple paths (paths with no repeating vertices) in a digraph that cannot be extended without losing the property of being simple (i.e., adding any vertex to the start or to the end would repeat some vertex in the path). We start with the introductory, motivating example. 

\begin{figure}[ht]
  \centering
  \begin{tikzpicture}[
    node/.style={circle, draw=black, fill=white, inner sep=1.5pt, minimum size=6pt},
    arr/.style={-{Stealth[length=6pt, width=5pt]}, semithick}
]

\def\n{4} 

\node[node] (s1) {};

\foreach \i in {1,...,\n} {
    \node[node] (u\i) [above right of=s\i]{};
    \node[node] (v\i) [below right of=s\i]{};

    \node[node] (s\the\numexpr\i+1\relax) [right=1cm of s\i] {};

    \draw[arr] (s\i) to (u\i);
    \draw[arr] (s\i) to (v\i);

     \draw[arr] (u\i) to (s\the\numexpr\i+1\relax);
     \draw[arr] (v\i) to (s\the\numexpr\i+1\relax);
}

 \node[left=2pt] at (s1) {$s$};
 \node[right=2pt] at (s5) {$t$};

\end{tikzpicture}
  \caption{A motivating example graph $D_4$}
  \label{fig:intro_example}
\end{figure}
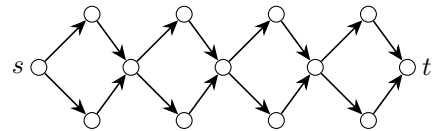

Observe that for the graph $D_4$ in Fig. \ref{fig:intro_example}, any of its prime paths must start in $s$ and end in $t$. Moreover, the graph has exactly $2^4 = 16$ prime paths. It is also easy to generalize that construction to $D_n$ to give $2^{\frac{n-1}{3}}$ prime paths, where $n$ is the number of vertices. Finally, it is worth noting that the resulting structures would be directed acyclic graphs (DAGs) that model the control-flow graphs (CFGs) of code composed of consecutive \texttt{if-else} blocks. This example motivates our research: since the number of prime paths can grow exponentially, even for DAGs, it is desirable to have a solution that computes prime paths one by one while reasonably using additional memory. In subsequent sections, we empirically confirm the motivation by running experiments.

The properties and advantages of PPC, as described above, make this coverage criterion the subject of numerous studies. Li \cite{Li2009}, Durelli \cite{Durelli2018}, and Monemi \cite{Monemi2020} experimentally compared PPC with mutation testing, Edge Coverage, Edge-Pair Coverage, and All-Uses Coverage, with PPC proving the most effective. Surprisingly, PPC generated only slightly more test coverage items than Edge-Pair Coverage.

Generating prime paths and the resulting optimal test sets that satisfy PPC is nontrivial, which is why many authors use AI-based approaches to do so. Silva \cite{Silva2020} emphasizes that there is a lack of tools for PPC and uses ML algorithms to predict the PPC based on a given test suite. Yadegari \cite{Yadegari2023} uses a self-adaptive algorithm that automatically designs test cases achieving PPC. Other authors use various evolutionary methods to achieve this goal, such as ant colony optimization and Markov chains \cite{Sayyari2015}, \cite{Bueno2002}, particle swarm optimization \cite{Bidgoli2017}, or genetic programming \cite{Goschen2022}. 

Dwarakanath \cite{Dwarakanath2014} first uses a classical prime path generation algorithm described in \cite{ammann2008} and then utilizes the minimum flow algorithm to generate an optimal test suite in $O(|V| \cdot |E|)$ time. Li \cite{Li2012} investigates the test generation problem as a shortest superstring problem, which is NP-complete, and presents some polynomial approximation algorithms. Dwarakanath and Jankiti \cite{Dwarakanath2014} exploit a max-flow/min-cut algorithm. Bure\v{s} \cite{Bures2019} generalizes the test generation problem in a more abstract setting where they take a model of the Program Under Test, a coverage criterion (e.g., prime path), the priority of each component in the Program Under Test, and an optimality criterion (e.g., minimum number of paths). Then, they generate a test set that meets an optimality criterion using search-based methods that exploit heuristic search techniques. 

Fazli \cite{Fazli2019} notes that most existing methods for generating prime/test paths have limited success in producing sets of all prime/test paths for structurally complex programs (e.g., those with high NPATH complexity). He proposes a time-efficient, vertex-based, compositional approach in which prime paths are computed for the SCCs of the CFG, and then composed with prime paths for the CFG whose vertices represent the SCCs. In \cite{Fazli2023}, Fazli parallelizes the approach from \cite{Fazli2019} along with a GPU implementation.

\subsection{Our contribution}

Our contributions are as follows:
\begin{itemize}
\item We provide a new characterization of prime paths in directed graphs in terms of strongly connected components (SCCs) and paths in the condensation graph. 
\item We establish an intractability result for the problem of enumerating non-extendable simple paths with polynomial delay, assuming a specific order of outputs.
\item We derive a set of non-trivial necessary conditions that any start or end vertex of a prime path must satisfy. 
\item We show how prime path generation can be reduced to cycle enumeration in an augmented graph constructed from the original input. 
\item We design an algorithm that incrementally generates prime paths without explicitly enumerating all simple paths. The algorithm integrates SCC-based decomposition, endpoint filtering, and early pruning, and uses a Johnson-style traversal as an enumeration mechanism within the reduced search space.
\item We identify SCC boundaries as checkpoints for termination, allowing us to discard some subtrees during further exploration.
\item We implement our approach, compare it with other approaches existing in the literature, and evaluate it on a large dataset of real-world control flow graphs. 
\item We demonstrate how our approach can be directly applied to generate test coverage items other than prime paths, such as simple paths or simple cycles.
\end{itemize}

\section{Preliminaries} \label{sec:preliminaries}

Let $G=(V, E)$ be a directed graph, where $V$ is a finite set of vertices and $E \subseteq V \times V$ is a finite set of edges. A {\it path} in $G$ is a sequence $(v_1, v_2, \dots, v_n)$ of vertices with $n \geq 1$ such that $(v_i, v_{i+1}) \in E$ for each $i = 1, \dots, n-1$. The {\it length} of a path $p = (v_1, \dots, v_n)$, denoted by $|p|$, is $n-1$, i.e., the number of edges in the path. Further, put $V((v_1, ..., v_n))=\{v_1, ..., v_n\}$. The set of all paths in $G$ is denoted by $P(G)$. A path $(v_1, \dots, v_n)$ is {\it simple} if all its vertices are distinct, i.e., $\forall i,j \in \{1,2,\ldots,n\},\ i \ne j \Rightarrow v_i \ne v_j.$ The set of all simple paths in $G$ is denoted by $SP(G)$. A {\it simple cycle} in $G=(V, E)$ is a path $(v_1, v_2, ..., v_n, v_1)$, $n \geq 1$, such that $(v_1, ..., v_n)$ is a simple path. The set of all simple cycles in $G$ is denoted by $SC(G)$. A simple path $(v_1, ..., v_n)$ is  {\it forward extendable} if there exists $v \in V$ such that $(v_1, ..., v_n, v) \in SP(G) \cup SC(G)$, {\it backward extendable} if there exists $v \in V$ such that $(v, v_1, ..., v_n) \in SP(G) \cup SC(G)$, {\it extendable} if it is either forward or backward extendable, and {\it non-extendable} if it is not extendable. A path $p$ in $G$ is a {\it prime path} if it is a simple cycle or a non-extendable simple path. The set of all prime paths in $G$ is denoted by $PP(G)$. A path $(v_1, v_2, ..., v_n)$ in $G=(V, E)$ is {\it e-acyclic} if all its edges are unique, i.e., $\forall i, j \in \{1, \ldots n-1\}, i \neq j$, we have $(v_i,v_{i+1}) \neq (v_j, v_{j+1})$. The set of all e-acyclic paths in $G$ is denoted by $eAP(G)$. 

A directed graph $G = (V,E)$ is said to be {\it strongly connected} if for every pair of vertices 
$u,v \in V$ there exists a path from $u$ to $v$. Let $G = (V, E)$ be a directed graph and let $U \subseteq V$. The {\it induced subgraph of} $G$ {\it on} $U$, denoted by $G[U]$, is the graph $G[U] = (U, \{\, (u,v) \in E \mid u,v \in U \,\}).$ Let $G = (V,E)$ be a directed graph. A {\it strongly connected component} (\textit{SCC}) of $G$ is a subgraph of $G$ induced by a subset $SC \subseteq V$ such that $G[SC]$ is strongly connected and for every $v \in V \setminus SC$ the subgraph $G[SC \cup \{v\}]$ is not strongly connected. Given $v \in V$, denote by $SC(v)$ an SCC of $G$ that contains $v$. For $v \in V$ we define its \textit{set of outgoing vertices} $out(v) = \{u \in V | (v,u) \in E\}$, and \textit{set of incoming vertices} $in(v) = \{u \in V | (u,v) \in E\}$.  A directed graph $G=(V,E)$ is called a {\it single entry single exit graph} ({\it SESE}), if it satisfies the following two conditions:

\begin{enumerate}
    \item\label{sese1} $|V|\geq 2$ and there exist distinct vertices $s(G), t(G) \in V$ (called resp. the entry and the exit vertex) such that $in(s(G))=\varnothing$ and $out(t(G))=\varnothing$.
    \item\label{sese2} Every vertex $v \in V$ belongs to at least one path from $s(G)$ to $t(G)$.
\end{enumerate} 

Given $G=(V,E)$, its {\it line graph} $L(G) = (L(V), L(E))$ is defined as follows: $L(V) = E$, $L(E) = \{ ((u, v),(v, w)) \mid (u, v) \in E \land (v, w) \in E \}$. The {\it line path reduction} is a function $pr \colon P(L(G))\to P(G)$ defined by $pr(((v_{1}, v_{2}), (v_{2}, v_{3}), \dots, (v_{n-1}, v_{n}))) = (v_{1}, v_2,\dots,v_{n}).$

The {\it rotations set of a simple cycle} $(v_1, ..., v_n, v_1)$ is a set $R((v_{1},v_{2},\dots,v_{n},v_{1}))=\\\left\{(v_i, v_{i+1}, \dots, v_n, v_1, v_2, \dots, v_i) \;|\; i = 1, 2, \dots, n\right\}.$
Let $G$ be an SESE graph that models the control flow of a program. A {\it test path} in $G$ is a path from $s(G)$ to $t(G)$. A test path $(v_1,\dots,v_n)$ in $G$ {\it covers} a path $(w_1,\dots,w_k)$ if there exists $i \in \{1, ..., n-k+1\}$ such that $(w_1, ..., w_k) = (v_i, ..., v_{i+k-1}).$ A set $TP$ of test paths covers a set of paths $S$ if for each $p \in S$ there exists $t \in TP$ such that $t$ covers $p$. 

\section{Prime Paths Characterization}\label{PPCharac}

This section establishes theoretical properties of prime paths in directed graphs, which are used in the following section to design a more efficient algorithm that enumerates all prime paths in a given graph. We start with preliminary facts that are folklore.
\begin{proposition}
In a directed graph $G = (V, E)$, every $v \in V$ belongs to exactly one SCC of $G$.
\end{proposition}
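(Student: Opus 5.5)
We prove existence and uniqueness separately, both via the mutual-reachability relation. Write $u \sim v$ if there is a path from $u$ to $v$ and a path from $v$ to $u$ in $G$. Since paths with $n \geq 1$ are allowed, the one-vertex path $(v)$ gives reflexivity. Symmetry is immediate from the definition. Transitivity follows by concatenating a path from $u$ to $v$ with a path from $v$ to $w$, which yields a path from $u$ to $w$, and likewise in the reverse direction. Hence $\sim$ is an equivalence relation on $V$, and its classes partition $V$.

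For \textbf{existence}, fix $v$ and let $C=[v]_\sim$. We claim $G[C]$ is an SCC containing $v$.

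The first step is to show $G[C]$ is strongly connected. Take $x,y\in C$ and a path $x \to y$ in $G$. Every vertex $z$ on this path is reachable from $x$, and $y$ is reachable from $z$. Since $y \sim x$, the vertex $x$ is reachable from $z$ via $y$, so $z \sim x$ and $z \in C$. Thus the whole path lies in $G[C]$.

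The second step is maximality. If $w\notin C$ and $G[C\cup\{w\}]$ were strongly connected, then $w \sim v$, which contradicts $w\notin C$.

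For \textbf{uniqueness}, let $G[S]$ be any SCC with $v\in S$. Strong connectivity of $G[S]$ gives $x\sim v$ for all $x\in S$, so $S\subseteq C$.

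The main obstacle is the reverse inclusion, because the definition of SCC only forbids adding \emph{one} vertex at a time. Suppose $S\subsetneq C$ and pick $w\in C\setminus S$. The natural attempt is to show $G[S\cup\{w\}]$ is strongly connected, but this may fail: the paths between $w$ and $S$ may pass through other vertices outside $S$. To handle this, choose $w$ more carefully. Take a path $P$ in $G$ from $v$ to some vertex of $C\setminus S$, and let $w$ be the first vertex on $P$ outside $S$. Then the predecessor of $w$ on $P$ lies in $S$, so there is an edge from $S$ into $w$.

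We also need a path from $w$ back to $S$ that uses only vertices of $S\cup\{w\}$. Consider a shortest path in $G$ from $w$ to $S$; its internal vertices avoid $S$. If it has no internal vertices, we are done. Otherwise, replace the first-vertex trick with a minimality argument. Let $T\supseteq S$ be a maximal subset of $C$ such that $G[T]$ is strongly connected. The previous step shows that no strongly connected $G[T]$ with $T\subsetneq C$ can be maximal: the set $T'$ formed by $T$ together with the vertices of a path $T\to w\to T$ is strongly connected and strictly larger. Hence the maximal such $T$ equals $C$.

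This shows that the reading of the definition matters. If the single-vertex condition is not intended as equivalent to set-maximality, uniqueness can genuinely fail. So I would state explicitly that the SCC definition is interpreted as inclusion-maximal. Under that reading, $S=C$, and every vertex lies in exactly one SCC, namely $G[[v]_\sim]$.
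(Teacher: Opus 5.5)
Your argument is correct under the standard reading of ``SCC'', and it is the standard folklore proof. The paper gives no proof of this proposition; it only calls it folklore. Your existence part is complete, and it even verifies the paper's one-vertex condition for $G[C]$.

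Your suspicion about the definition is right, and you should state it as a fact rather than a hedge: under the paper's literal definition (no \emph{single} vertex can be added), the proposition is false. Take the directed triangle $a\to b\to c\to a$. The graph $G[\{a\}]$ is strongly connected via the trivial path $(a)$. Neither $G[\{a,b\}]$ (only the edge $(a,b)$) nor $G[\{a,c\}]$ (only the edge $(c,a)$) is strongly connected. Hence $G[\{a\}]$ is an SCC in the paper's sense, and so is $G[\{a,b,c\}]$, so $a$ lies in two of them. This is exactly your ``internal vertices'' case: starting from $S=\{a\}$, the first outside vertex $b$ gets back to $S$ only through $c$.

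Consequently, the middle of your uniqueness paragraph (the shortest path from $w$ to $S$, and the maximal $T\supseteq S$) cannot rescue the literal version. It is also unnecessary for the intended version. Once SCCs are taken to be inclusion-maximal strongly connected induced subgraphs, two facts you already proved finish uniqueness in one line: $S\subseteq C$, and $G[C]$ is strongly connected. Together with maximality of $S$ they force $S=C$. The $T'$ enlargement construction is not needed.

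A cleaner write-up would be: the equivalence relation $\sim$, existence as you wrote it, the triangle as an explicit remark on why the definition must be read as inclusion-maximal, and then the two-line uniqueness argument. That reading is clearly what the paper intends. Under the literal definition, its unique-partition proposition and $Cond(G)$ would also be ill-defined: in the triangle, both $\{a\},\{b\},\{c\}$ and $\{a,b,c\}$ would be partitions of $V$ into SCCs.
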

 
\begin{proposition}
 Let $G = (V, E)$ be a directed graph. There exists a unique partition $SC_1, \ldots, SC_m$ of $V$ where $SC_i$ induces an SCC in $G$ for every $1 \leq i \leq m$.
\end{proposition}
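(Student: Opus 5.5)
The plan is to derive existence and uniqueness of the partition directly from the preceding proposition. That proposition says every vertex lies in exactly one SCC. I will use the equivalence relation $u \sim v$ iff there is a path from $u$ to $v$ and a path from $v$ to $u$ in $G$. Reflexivity comes from paths of length $0$, symmetry is immediate, and transitivity follows by concatenating paths.

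\textbf{Step 1: each SCC vertex set is an equivalence class.}
\begin{itemize}
\item Let $SC$ induce an SCC.
\item All vertices of $SC$ are pairwise equivalent. Indeed, $G[SC]$ is strongly connected, and paths in $G[SC]$ are also paths in $G$.
\item Conversely, let $u\in SC$ and $v\sim u$. Then every vertex on a $u$--$v$ path or a $v$--$u$ path is equivalent to $u$.
\item So the union $C$ of $SC$ with these path vertices induces a strongly connected subgraph. Each $w\in C$ reaches every vertex of $SC$ and back, using edges within $C$ only.
\item By maximality in the definition of SCC, adding $v$ must not preserve strong connectivity. But $G[SC\cup\{v\}]$ need not contain those connecting paths, so this point needs care.
\end{itemize}

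\textbf{Step 2: the maximality step, which I expect to be the main obstacle.} The paper's definition only forbids adding a \emph{single} vertex. So I would show that if $C \supsetneq SC$ is strongly connected (as an induced subgraph), then some single vertex $v\in C\setminus SC$ makes $G[SC\cup\{v\}]$ strongly connected. Choose a shortest path inside $G[C]$ that leaves $SC$ and returns to it, i.e.\ an "ear" $x, y_1,\dots,y_k, z$ with $x,z\in SC$ and $y_i\notin SC$.

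If $k=1$, then $v=y_1$ works directly. If $k>1$, the single-vertex definition genuinely fails to force equality with the equivalence class. A directed cycle of length $3$ is an example: $\{a\}$ satisfies the single-vertex condition but is not a class.

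So I would read "SCC" in the intended sense, as a maximal strongly connected induced subgraph with respect to inclusion, which is evidently the folklore meaning the authors intend. I would note this interpretation explicitly in the proof. Under it, Step 1 immediately gives $SC = [u]_\sim$.

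\textbf{Step 3: conclude existence and uniqueness.}
\begin{itemize}
\item Each class $[v]_\sim$ induces a strongly connected subgraph, since connecting paths stay inside the class as shown above.
\item Each class is maximal, because any strongly connected superset consists of mutually equivalent vertices.
\item Hence the SCCs are exactly the equivalence classes, which partition $V$. This gives existence.
\item For uniqueness, any partition $SC_1,\dots,SC_m$ into SCC vertex sets consists of equivalence classes. Since these classes are disjoint and cover $V$, the partition must be the full set of classes, i.e.\ the quotient $V/\sim$.
\item This matches the preceding proposition: the block containing $v$ is forced to be the unique SCC containing $v$.
\end{itemize}
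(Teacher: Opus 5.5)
The paper gives no proof of this proposition; it lists it among ``preliminary facts that are folklore,'' so there is no authorial argument to compare against. Your mutual-reachability argument is the standard one, and Steps 1 and 3 are correct once an SCC is read as an inclusion-maximal vertex set inducing a strongly connected subgraph. Every vertex on a $u$--$v$ or $v$--$u$ path is equivalent to $u$. This shows that each class induces a strongly connected subgraph. It also shows that an inclusion-maximal strongly connected set must contain the whole class of each of its vertices. Hence the SCCs are exactly the classes, and uniqueness follows because every block of a partition into SCCs is the class of each of its vertices.

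Your diagnosis of the definition is right, and it is sharper than you state: under the paper's literal single-vertex maximality condition the proposition is false, not merely harder to prove. In the directed triangle $a\to b\to c\to a$, each of $\{a\},\{b\},\{c\}$ satisfies the literal condition, and so does $\{a,b,c\}$ (vacuously). So $\{\{a\},\{b\},\{c\}\}$ and $\{\{a,b,c\}\}$ are two distinct partitions into ``SCCs.'' The preceding proposition fails for the same reason, since $a$ lies in both $\{a\}$ and $\{a,b,c\}$. Switching to inclusion-maximality is therefore necessary, not a choice of interpretation, and you should say so explicitly.

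Two small cleanups. First, your opening sentence says you will derive the result from the preceding proposition, but you never use it, and you could not, since it also requires the corrected definition; your Steps 1 and 3 in fact prove it. Second, the ear construction in Step 2 is a dead end and can be dropped once the definition is fixed.
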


\begin{figure}[ht]
  \centering
  \begin{tikzpicture}[
    node/.style={circle, draw=black, fill=white, inner sep=1.5pt, minimum size=6pt},
    arr/.style={-{Stealth[length=6pt, width=5pt]}, semithick}
]

\node[node] (s) at (0,0) {};
\node[node] (v1) at (0,1.7) {};
\node[node] (v2) at (0,3.3) {};
\node[node] (v3) at (1.0,4.0) {};

\node[node] (v4) at (1.5,0.8) {};
\node[node] (v5) at (2.7,0) {};
\node[node] (v6) at (2.8,2.7) {};
\node[node] (v7) at (7.6,2.8) {};

\node[node] (v8) at (5.7,3.0) {};
\node[node] (v9) at (4.5,2.0) {};
\node[node] (v10) at (4.6,4.8) {}; 

\node[node] (v11) at (1.3,2.9) {};
\node[node] (v12) at (4.2,3.3) {};
\node[node] (v13) at (5.7,1.4) {};
\node[node] (t) at (6.8,4.3) {};


 \draw[arr] (s) to[bend left=8] (v1);
 \draw[arr] (s) to[bend right=8] (v4);
 \draw[arr] (v2) to[bend right=8] (v11);
 \draw[arr] (v11) to[bend right=8] (v3);
 \draw[arr] (v3) to[bend right=8] (v2);
 \draw[arr] (v3) to[bend left=10] (v10);
 \draw[arr] (v10) to[bend left=10] (t);
 \draw[arr] (v1) to[bend left=10] (v11);
 \draw[arr] (v11) to[bend left=10] (v6);
 \draw[arr] (v13) to[bend left=10] (v7);
 \draw[arr] (v7) to[bend left=10] (v13);
 \draw[arr] (v7) to[bend right=15] (t);
 \draw[arr] (v9) to[bend left=15] (v13);
 \draw[arr] (v8) to[bend right=15] (v7);
 \draw[arr] (v4) to[bend right=15] (v5);
 \draw[arr] (v9) to[bend right=15] (v6);
 \draw[arr] (v5) to[bend right=15] (v9);
 \draw[arr] (v6) to[bend right=15] (v5);
 \draw[arr] (v6) to[bend right=15] (v12);
 \draw[arr] (v12) to[bend right=15] (v6);
 \draw[arr] (v8) to[bend right=15] (v12);
 \draw[arr] (v12) to[bend right=15] (v8);

 \node[below=2pt] at (s) {s};
 \node[right=2pt] at (t) {t};
 \node[right=2pt] at (v1) {$v_1$};
 \node[left=2pt] at (v2) {$v_2$};
 \node[above=2pt] at (v3) {$v_3$};
\node[right=2pt] at (v4) {$v_4$};
\node[right=2pt] at (v5) {$v_5$};
\node[above=2pt] at (v6) {$v_6$};
\node[right=2pt] at (v7) {$v_7$};
\node[above=2pt] at (v8) {$v_8$};
\node[above=2pt] at (v9) {$v_9$};
\node[above=2pt] at (v10) {$v_{10}$};
\node[below=3pt] at (v11) {$v_{11}$};
\node[above=2pt] at (v12) {$v_{12}$};
\node[left=2pt] at (v13) {$v_{13}$};

\end{tikzpicture}
  \caption{A SESE digraph $G_{ex}$}
  \label{fig:example_graph}
\end{figure}
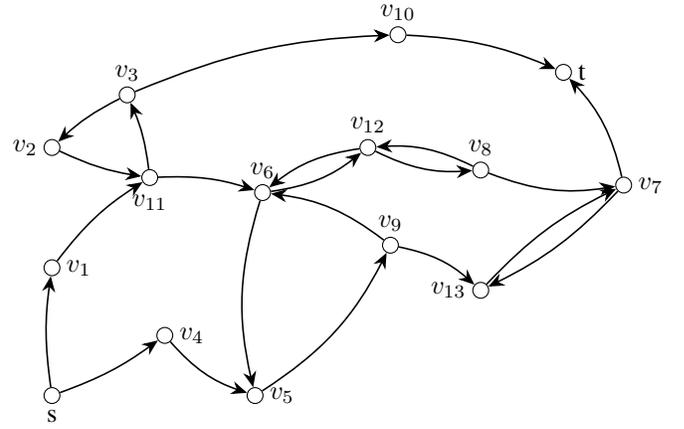

For example, the graph $G_{ex}$ from Fig. \ref{fig:example_graph} can be partitioned into SCCs induced by sets $\{s\}, \{v_1\}, \{v_4\}, \{v_2, v_3, v_{11}\}, \{v_{10}\}, \{v_5, v_6, v_8, v_9, v_{12}\}, \{v_7, v_{13}\},\ \{t\}$.

For a digraph $G = (V, E)$, we define a corresponding \textit{condensation graph} $Cond(G) = (\{SC_1, \ldots, SC_m\}, E_{Cond})$, where $SC_1, \ldots SC_m$ is a partition of $V$ inducing SCCs and \linebreak $(SC_i, SC_j) \in E_{Cond}$ iff there exist $v_i \in SC_i, v_j \in SC_j$, $i, j \in \{1, ..., m\}$, such that $(v_i, v_j) \in E$. It is easy to check the following two propositions.
\begin{proposition}
    For every digraph $G$, $Cond(G)$ is a directed acyclic graph. 
\end{proposition}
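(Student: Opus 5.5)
We argue by contradiction: we suppose that $Cond(G)$ contains a directed cycle and show that this contradicts the maximality built into the definition of an SCC. We will use the two preceding propositions, namely that every vertex lies in exactly one SCC and that $SC_1,\ldots,SC_m$ is a partition of $V$. The only facts needed about paths are that concatenating a path from $u$ to $w$ with a path from $w$ to $v$ gives a path from $u$ to $v$, and that the trivial path $(v)$ of length $0$ connects $v$ to itself.

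First, we note that $Cond(G)$ has no loops. The definition of $E_{Cond}$ as written would allow $(SC_i,SC_i)$ whenever $G$ has an edge inside $SC_i$. We read the definition, as is standard, as requiring $i\neq j$. This reading is forced if the proposition is to hold at all, and we will state it explicitly in the proof.

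Next, suppose $Cond(G)$ has a cycle $SC_{i_1}\to SC_{i_2}\to\cdots\to SC_{i_k}\to SC_{i_1}$ with $k\ge 2$ and $SC_{i_1}\neq SC_{i_2}$. For each arc $(SC_{i_r},SC_{i_{r+1}})$ we fix an edge $(a_r,b_{r+1})\in E$ with $a_r\in SC_{i_r}$ and $b_{r+1}\in SC_{i_{r+1}}$. Since each $G[SC_{i_r}]$ is strongly connected, there is a path inside $SC_{i_r}$ from $b_r$ to $a_r$. Concatenating these pieces gives a closed walk in $G$ that passes through every component on the cycle. Consequently, for any $x\in SC_{i_1}$ and $y\in SC_{i_2}$, there is a path from $x$ to $y$ and a path from $y$ to $x$ in $G$.

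Finally, we derive the contradiction. Let $U$ be the union of the vertex sets of all components on the cycle. By the previous step, every vertex of $U$ reaches every other vertex of $U$ along the closed walk, and every vertex on that walk lies in $U$. Hence $G[U]$ is strongly connected. Take any $v\in SC_{i_2}$, which is outside $SC_{i_1}$. Then $G[SC_{i_1}\cup\{v\}]$ is strongly connected as well: a path from $v$ to a vertex of $SC_{i_1}$ can be chosen to stay inside $SC_{i_1}\cup\{v\}$ by shortcutting at the first vertex of $SC_{i_1}$ it hits, and similarly for paths into $v$. This shortcutting step is the main obstacle, because the connecting paths run through other components and we must ensure they remain inside the induced subgraph on $SC_{i_1}\cup\{v\}$.

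If the shortcutting cannot be carried out directly, we use a cleaner route. The relation ``$u$ and $w$ are mutually reachable'' is an equivalence relation, and its classes are exactly the SCCs under the maximality definition. One checks that a class $C$ induces a strongly connected subgraph, because a path between two vertices of $C$ visits only vertices mutually reachable with them. Any strongly connected induced superset of $C$ must then lie in $C$. By uniqueness of the partition, each $SC_i$ is such a class. On the cycle, vertices of $SC_{i_1}$ and $SC_{i_2}$ are mutually reachable, so they lie in the same class, contradicting $SC_{i_1}\neq SC_{i_2}$. Therefore $Cond(G)$ is acyclic.
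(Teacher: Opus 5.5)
The paper gives no proof here (it only says ``easy to check''). Your final route is correct and is the standard folklore argument: mutual reachability is an equivalence relation, its classes induce strongly connected subgraphs that are inclusion-maximal, uniqueness in Proposition 2 identifies each $SC_i$ with a class, and a cycle in $Cond(G)$ makes vertices of $SC_{i_1}$ and $SC_{i_2}$ mutually reachable. Your remark that self-loops $(SC_i,SC_i)$ must be excluded from $E_{Cond}$ is also correct and worth keeping.

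You should delete the shortcutting paragraph rather than keep it as a hedge. The claim that $G[SC_{i_1}\cup\{v\}]$ is strongly connected is false in general, not merely hard to justify. The path from $v$ back to $SC_{i_1}$ may pass through other vertices of $SC_{i_2}$ or of other components, and truncating it at the first vertex of $SC_{i_1}$ does not remove them.

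In fact, no argument using only the paper's literal maximality condition (adding a \emph{single} outside vertex destroys strong connectivity) can succeed. In the directed triangle $a\to b\to c\to a$, each of $\{a\},\{b\},\{c\}$ satisfies that condition, and with this partition the condensation is a directed $3$-cycle. The same example shows Propositions 1 and 2 fail under the literal wording, so inclusion-maximality is clearly what is intended.

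The step that actually carries your proof is the identification of each $SC_i$ with a mutual-reachability class. This needs either the inclusion-maximal definition, under which your observation that $G[U]$ is strongly connected already gives the contradiction, or the uniqueness asserted in Proposition 2. Accordingly, your sentence that the classes are ``exactly the SCCs'' holds for inclusion-maximality. Under the paper's wording you only get that each class is an SCC, and uniqueness then does the rest, which is how you in fact use it.
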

\begin{proposition}
\label{prop:simple_cycles}
If $c$ is a simple cycle in $G$, then all vertices of $c$ belong to one SCC of $G$.
\end{proposition}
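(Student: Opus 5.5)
The plan is to show directly that any two vertices of $c$ are mutually reachable, so that they lie in a common strongly connected set. The uniqueness of the SCC containing a vertex (the first proposition of this section) then gives the claim. Write $c=(v_1,\dots,v_n,v_1)$ with $(v_1,\dots,v_n)$ simple. Fix $i,j\in\{1,\dots,n\}$. If $i\le j$, the subsequence $(v_i,v_{i+1},\dots,v_j)$ is a path from $v_i$ to $v_j$. If $i>j$, the walk $(v_i,\dots,v_n,v_1,\dots,v_j)$ is a path from $v_i$ to $v_j$, since $(v_n,v_1)\in E$ by the definition of a simple cycle. Hence every pair of vertices of $c$ is mutually reachable in $G$.

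Next, I would show that the vertex set $U=V(c)$ is contained in a single SCC. Fix $v_1$ and let $SC=SC(v_1)$. The key fact is that if $u$ and $w$ are mutually reachable, then $w\in SC(u)$. To see this, suppose $w\notin SC(u)$. Every vertex $x\in SC(u)$ reaches $u$ inside $SC(u)$, then reaches $w$; and $w$ reaches $u$, then reaches $x$. So $G[SC(u)\cup\{w\}]$ would be strongly connected, provided the connecting paths stay inside $SC(u)\cup\{w\}$. This is the subtle point, because the paths from $u$ to $w$ and back may pass through outside vertices.

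To handle this, I would enlarge the set by the vertices of those paths. Let $S=SC(u)\cup V(p)\cup V(q)$, where $p$ is a path from $u$ to $w$ and $q$ a path from $w$ to $u$. Then $G[S]$ is strongly connected: every vertex on $p$ or $q$ lies on the closed walk $p\cdot q$ through $u$, and $u$ connects to all of $SC(u)$ inside $G[SC(u)]$. This contradicts the maximality of $SC(u)$, provided maximality is read as "no strongly connected induced subgraph properly contains it". The paper's definition only forbids adding a single vertex, so I would close the gap by adding the vertices of $S\setminus SC(u)$ one at a time. A cleaner route is to note that the standard equivalence-class characterization of SCCs (mutual reachability) is implicit in the folklore partition propositions. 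I would invoke those propositions, reading SCCs as the classes of the mutual-reachability relation.

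With that in place, all $v_i$ are mutually reachable with $v_1$, so each $v_i\in SC(v_1)$, which proves the statement. The main obstacle is purely definitional: reconciling the paper's single-vertex maximality condition with the reachability characterization. I would resolve it by appealing to the preceding folklore propositions rather than re-deriving it.
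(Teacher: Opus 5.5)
The paper gives no proof of this proposition; it is stated as ``easy to check,'' presumably via the standard argument. That argument is exactly your first paragraph together with the fact that SCCs are the mutual-reachability classes. Your final route is correct, as is your enlargement argument with $S=SC(u)\cup V(p)\cup V(q)$ when maximality is read globally.

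The one step that would fail is the patch of ``adding the vertices of $S\setminus SC(u)$ one at a time.'' You intended it to produce a contradiction with the paper's single-vertex maximality condition, but single-vertex maximality is strictly weaker than maximality. Take the directed triangle $u\to a\to b\to u$. Here $G[\{u\}]$ is strongly connected, while neither $G[\{u,a\}]$ nor $G[\{u,b\}]$ is. So $\{u\}$ satisfies the paper's literal definition of an SCC, even though $G[\{u,a,b\}]$ is strongly connected. Starting from $\{u\}$, no single added vertex keeps the induced subgraph strongly connected, so the one-at-a-time procedure cannot even take its first step. Strongly connected graphs grow by ears, not by single vertices.

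Worse, under the literal definition $u$ lies in two SCCs, $\{u\}$ and $\{u,a,b\}$. So the first folklore proposition is false under that definition. Your key fact ``mutually reachable $\Rightarrow$ $w\in SC(u)$'' also fails for the choice $SC(u)=\{u\}$, even though $(u,a,b,u)$ is itself a simple cycle. The definitional gap therefore cannot be closed from the paper's definition as written. It has to be resolved, as you ultimately do, by reading an SCC as a maximal strongly connected induced subgraph (equivalently, a mutual-reachability class). Drop the one-at-a-time sentence and state that reading explicitly; the rest of your proof then goes through.
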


We can now characterize prime paths in terms of SCCs of $G$. For a path $p = (v_1, \ldots, v_n)$ in a digraph $G$ with SCCs induced by $SC_1, \ldots, SC_m$, for each $1 \leq i \leq m$ we define a function $Cut_{SC_i}(p)$ that returns a set of vertices of $p$ contained in $SC_i$: $Cut_{SC_i}(p) = V(p) \cap SC_i.$

\begin{lemma}
\label{lemma:split_paths}
Let $G$ be a digraph, with SCCs induced by $SC_1, \ldots, SC_m$, and let $p = (v_1, \ldots, v_n)$ be a path in $G$. There exists a simple path $s = (SC_{t_1}, \dots, SC_{t_l})$ in $Cond(G)$ and a subsequence $(j_1, \ldots, j_l)$ of $(1,2, \ldots, n)$ with $j_1 = 1$, and $v_{j_i}, v_{j_i + 1} \ldots v_{j_{i+1} - 1} \in SC_{t_i}$ for all $i \in \{1, \ldots, l - 1\}$ and $v_{j_l}, \ldots, v_n \in SC_{t_l}$.
\end{lemma}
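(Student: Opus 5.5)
The plan is to define the decomposition directly from the map $v \mapsto SC(v)$ and then show that the resulting sequence of components is a simple path in $Cond(G)$. Consider the sequence $(SC(v_1), SC(v_2), \ldots, SC(v_n))$. Let $j_1 = 1$. Define $j_{i+1}$ as the smallest index greater than $j_i$ with $SC(v_{j_{i+1}}) \neq SC(v_{j_i})$, stopping once no such index exists; call the number of indices obtained $l$. Put $SC_{t_i} = SC(v_{j_i})$. By construction $(j_1,\ldots,j_l)$ is a strictly increasing subsequence of $(1,\ldots,n)$ with $j_1=1$. Every vertex $v_{j_i},\ldots,v_{j_{i+1}-1}$ lies in $SC_{t_i}$, by minimality of $j_{i+1}$. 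Likewise $v_{j_l},\ldots,v_n \in SC_{t_l}$, because no later change occurs. This gives the block structure required by the lemma.

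Next, I will check that $s=(SC_{t_1},\ldots,SC_{t_l})$ is a path in $Cond(G)$. For each $i<l$, the pair $(v_{j_{i+1}-1}, v_{j_{i+1}})$ is an edge of $G$. Its tail lies in $SC_{t_i}$ and its head lies in $SC_{t_{i+1}}$. By the definition of $E_{Cond}$, this gives $(SC_{t_i}, SC_{t_{i+1}}) \in E_{Cond}$. Consecutive entries are distinct by construction, so no self-loop issue arises.

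The main step is simplicity of $s$. Suppose $SC_{t_a} = SC_{t_b}$ for some $a<b$. Then the subpath of $p$ from $v_{j_a}$ to $v_{j_b}$ passes through $v_{j_{a+1}} \in SC_{t_{a+1}} \neq SC_{t_a}$. Let $u = v_{j_{a+1}}$. There is a path from $v_{j_a}$ to $u$, namely part of $p$. There is also a path from $u$ to $v_{j_b}$, again part of $p$. Since $v_{j_b}$ and $v_{j_a}$ lie in the same SCC, there is a path from $v_{j_b}$ back to $v_{j_a}$. Hence $u$ and $v_{j_a}$ are mutually reachable.

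This contradicts maximality of the SCC $SC_{t_a}$. Concretely, $G[SC_{t_a}\cup\{u\}]$ fails to be strongly connected only if some connecting path leaves that set, so the argument must be phrased carefully. I will argue that every vertex on the closed walk $v_{j_a} \to u \to v_{j_b} \to v_{j_a}$ is mutually reachable with $v_{j_a}$. Then I will use the folklore fact, the first proposition (each vertex lies in exactly one SCC), together with the standard characterization of SCCs as mutual-reachability classes, to conclude $u \in SC_{t_a}$. That contradicts $SC(u) \neq SC_{t_a}$.

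Alternatively, the contradiction can come from acyclicity of $Cond(G)$ (the third proposition). The sequence $SC_{t_a},\ldots,SC_{t_b}$ is a closed walk in $Cond(G)$ of positive length with distinct consecutive entries, so it would yield a cycle in a DAG. I expect to use this second route, since it relies directly on an earlier proposition. The only subtlety is ensuring the closed walk contains a genuine cycle, which holds because any closed walk of positive length without self-loops contains a simple cycle.
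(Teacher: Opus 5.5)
Your proposal is correct and is essentially the paper's proof: the same left-to-right cut of $p$ at every change of SCC, with a repeated component ruled out because it would force a closed walk in $G$ through two different SCCs (the paper's one-line justification); your route via acyclicity of $Cond(G)$ is just a reformulation of that argument. You are more explicit than the paper in checking that consecutive blocks give edges of $Cond(G)$, and your caution about deducing $u \in SC_{t_a}$ from the paper's one-vertex maximality definition is warranted (in the cycle $a \to b \to c \to a$, the set $\{a\}$ satisfies that definition), so relying on the mutual-reachability characterization, as you plan, is the right choice.
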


Note that for a sequence $(SC_{t_1}, \dots, SC_{t_l})$ given by Lemma \ref{lemma:split_paths} we have that $Cut_{SC_{t_i}}(p) \neq \varnothing$ for all $1 \leq i \leq l$ and that $\bigcup_{i=1}^l Cut_{SC_{t_i}}(p) = V(p)$.

\begin{theorem}
\label{thm:pp_characterization}
    Let $G$ be a digraph, with $SC_1, \ldots, SC_m$ inducing its SCCs and let $p = (v_1, \ldots, v_n)$ be a path in $G$. Then $p$ is a prime path if and only if one of the following conditions is satisfied: \begin{enumerate}
        \item \label{pp_characterization:cond_1} $p$ is a simple cycle;
        \item \label{pp_characterization:cond_2} $p$ is a simple path, there exists $k \in \mathbb{N}$ such that $V(p) \subseteq SC_k$, $in(v_1) \subseteq V(p) \setminus \{v_n\}$, $out(v_n) \subseteq V(p) \setminus \{v_1\}$;
        \item \label{pp_characterization:cond_3} there exists a simple path $(SC_{t_1}, \dots, SC_{t_l})$ in $Cond(G)$ with $Cut_{SC_{t_i}}(p) \neq \varnothing$ for all $1 \leq i \leq l$ and with $\bigcup_{i=1}^l Cut_{SC_{t_i}}(p) = V(p)$ such that $Cut_{SC_{t_i}}(p)$ induces a simple path with $in(v_1) \subseteq Cut_{SC_{t_1}}(p)$ and $out(v_k) \subseteq Cut_{SC_{t_l}}(p)$.
    \end{enumerate}
\end{theorem}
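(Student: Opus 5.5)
We prove both directions directly from the definitions, using Lemma~\ref{lemma:split_paths} and Proposition~\ref{prop:simple_cycles}. We read condition~\ref{pp_characterization:cond_3} as the case $l \geq 2$, with the last condition being $out(v_n) \subseteq Cut_{SC_{t_l}}(p)$. The basic observation is a single fact about backward extensions. For a simple path $p=(v_1,\dots,v_n)$ and a vertex $u\in in(v_1)$, the path $(u,v_1,\dots,v_n)$ lies in $SP(G)\cup SC(G)$ iff $u\notin V(p)$ or $u=v_n$. The symmetric statement holds for forward extensions. Consequently a simple path is non-extendable iff $in(v_1)\subseteq V(p)\setminus\{v_n\}$ and $out(v_n)\subseteq V(p)\setminus\{v_1\}$.

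\emph{($\Rightarrow$).} Let $p$ be prime. If $p$ is a simple cycle, condition~\ref{pp_characterization:cond_1} holds. Otherwise $p$ is a non-extendable simple path, so the inclusions above hold. Apply Lemma~\ref{lemma:split_paths} to get the simple path $(SC_{t_1},\dots,SC_{t_l})$ in $Cond(G)$ and the block decomposition of $p$.
\begin{itemize}
\item If $l=1$, then $V(p)\subseteq SC_{t_1}$ and condition~\ref{pp_characterization:cond_2} follows at once.
\item If $l\ge 2$, the blocks are pairwise disjoint consecutive segments of $p$, since the SCCs are distinct. So each $Cut_{SC_{t_i}}(p)$ is exactly the vertex set of a contiguous subpath of $p$, i.e. it induces a simple path in the intended sense. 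The union of the cuts is $V(p)$ by the remark after the lemma.
\end{itemize}
It remains, for $l\ge 2$, to localise $in(v_1)$. Take $u\in in(v_1)$; then $u=v_j$ for some $j$. The subpath $v_1\to\dots\to v_j$ together with the edge $(v_j,v_1)$ is a closed walk, so $v_j$ and $v_1$ lie in the same SCC. Hence $u\in Cut_{SC_{t_1}}(p)$. The argument for $out(v_n)\subseteq Cut_{SC_{t_l}}(p)$ is symmetric.

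\emph{($\Leftarrow$).} Condition~\ref{pp_characterization:cond_1} gives primeness by definition. Under condition~\ref{pp_characterization:cond_2}, $p$ is simple and the two inclusions are exactly the non-extendability criterion above. Under condition~\ref{pp_characterization:cond_3}, $p$ is simple because the cuts are pairwise disjoint (distinct SCCs) and each is traversed without repetition. Moreover $in(v_1)\subseteq Cut_{SC_{t_1}}(p)\subseteq V(p)$. Since $l\ge2$, the vertex $v_n$ lies in $SC_{t_l}\neq SC_{t_1}$, so $v_n\notin in(v_1)$ automatically. Thus $in(v_1)\subseteq V(p)\setminus\{v_n\}$, and symmetrically $out(v_n)\subseteq V(p)\setminus\{v_1\}$. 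Hence $p$ is non-extendable, and therefore prime.

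\emph{Main obstacle.} The mathematics is routine; the delicate point is pinning down condition~\ref{pp_characterization:cond_3}. Three things must be fixed: that $l\ge2$, that $v_k$ is to be read as $v_n$, and that ``$Cut_{SC_{t_i}}(p)$ induces a simple path'' means the vertices occur contiguously along $p$. The contiguity must be justified from acyclicity of $Cond(G)$. Once $p$ leaves an SCC it cannot return, since otherwise two distinct SCCs would be mutually reachable and hence merged.
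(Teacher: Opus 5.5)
Your proof is correct and follows the paper's route: a prime path is either a simple cycle or a non-extendable simple path, the single-SCC case gives condition~2, and the multi-SCC case goes through Lemma~\ref{lemma:split_paths}. Your closed-walk argument is a compact version of the paper's three-case localisation of $in(v_1)$. Your explicit reading of condition~3 (with $l\ge 2$ and $v_k=v_n$) and your check that $v_n\notin in(v_1)$ and $v_1\notin out(v_n)$ in the converse are genuinely needed. The paper's ``$\Leftarrow$'' argument ignores extension to a simple cycle, and without the restriction to $l\ge 2$ the claim fails: under your contiguity reading, $p=(a,b)$ in the 2-cycle $a\leftrightarrow b$ satisfies condition~3 with $l=1$ but is not prime.
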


Examples of prime paths for $G_{ex}$ (see Fig. \ref{fig:example_graph}), corresponding to conditions of Theorem \ref{thm:pp_characterization} are as follows: $v_{11} \xrightarrow{} v_3 \xrightarrow{} v_2 \xrightarrow{} v_{11}$ (condition \ref{pp_characterization:cond_1}); $v_{9} \xrightarrow{} v_6 \xrightarrow{} v_5$ (condition \ref{pp_characterization:cond_2}); $v_{2} \xrightarrow{} v_{11} \xrightarrow{} v_3 \xrightarrow{} v_{10} \xrightarrow{} t$ (condition \ref{pp_characterization:cond_3}).

Given Theorem~\ref{thm:pp_characterization}, one approach to enumerating prime paths is to do it in a structured order: first, enumerate all simple cycles, and then simple non-extendable paths starting at a specified vertex 
$s$. However, the following result shows that such an approach is unlikely to be efficient in general. Recall that a polynomial delay algorithm returns the first output in polynomial time, and the time between any two consecutive outputs is also polynomial \cite{JOHNSON1988119}.
\begin{theorem}
\label{thm:np_poly_delay}
There does not exist an algorithm with polynomial delay that enumerates all simple non-extendable paths starting at a given vertex, unless $P=NP$.
\end{theorem}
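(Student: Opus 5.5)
The plan is a polynomial-time reduction from the \textsc{Hamiltonian $u$--$z$ path} problem in directed graphs, which is NP-complete: given a digraph $H$ and two distinct vertices $u,z$, decide whether $H$ has a Hamiltonian path from $u$ to $z$. I will use the standard convention that a polynomial delay algorithm also halts within polynomial time after its last output, and hence halts in polynomial time when there are no outputs at all. Under this convention, such an enumerator decides in polynomial time whether at least one output exists: simulate it for the appropriate polynomial number of steps and check whether it printed anything. So it suffices to build, in polynomial time, a digraph $G$ with a vertex $s$ such that $G$ has a non-extendable simple path starting at $s$ iff $H$ has a Hamiltonian $u$--$z$ path.

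\emph{Construction.} First delete all outgoing edges of $z$ in $H$; this does not affect Hamiltonian paths ending at $z$. Then add a new vertex $s$ with the edge $(s,u)$ and the edges $(v,s)$ for every $v \in V(H)\setminus\{z\}$. Call the result $G$; it is clearly computable in polynomial time.

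\emph{Correctness.} Let $p=(s,v_2,\dots,v_n)$ be a simple path in $G$.
\begin{itemize}
\item \emph{Backward extendability.} By the definitions in Section~\ref{sec:preliminaries}, $p$ is backward extendable exactly when some in-neighbour $v$ of $s$ satisfies one of two conditions. Either $v\notin V(p)$, so that $(v,s,\dots,v_n)\in SP(G)$. Or $v=v_n$, so that $(v_n,s,\dots,v_n)\in SC(G)$. The case $v=v_n$ must not be overlooked: a simple cycle also counts as an extension.
\item \emph{Consequence of not being backward extendable.} Then every vertex of $V(H)\setminus\{z\}$ lies on $p$, and $v_n\notin V(H)\setminus\{z\}$. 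Since $p$ has at least two vertices once it contains $V(H)\setminus\{z\}$, we get $v_n=z$. Thus $p$ visits all of $V(H)$, enters $H$ through $u$, and never returns to $s$, so $(v_2,\dots,v_n)$ is a Hamiltonian $u$--$z$ path of $H$.
\item \emph{Converse.} If $(u,\dots,z)$ is a Hamiltonian path of $H$, then $(s,u,\dots,z)$ is simple. It is not forward extendable because $out_G(z)=\varnothing$. It is not backward extendable because every in-neighbour of $s$ is on the path and differs from its last vertex $z$.
\end{itemize}

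The step I expect to need the most care is exactly this backward-extendability analysis, namely the cycle clause in the definition together with the choice to exclude $z$ from $in(s)$. A naive construction, such as adding a sink with trivial outputs, creates many spurious maximal paths. Routing all vertices back to $s$ instead forces any non-extendable path starting at $s$ to absorb the whole of $H$. Combining the construction with the decision procedure above gives a polynomial-time algorithm for an NP-complete problem, so $P=NP$.
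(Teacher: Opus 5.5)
Your proof is correct and follows the paper's strategy. Both reduce from a Hamiltonian path problem and add edges from (almost) every vertex back into the start vertex, so that a path from the start that is not backward extendable must absorb the whole graph; both then decide existence by running the enumerator until its first output or its polynomial-time halt. The only difference is the gadget: the paper reduces from the fixed-start version and adds a universal sink $t$, so forward non-extendability pins the endpoint. You fix both endpoints, turn $z$ into a sink and leave it out of $in(s)$, so backward non-extendability alone forces the path to end at $z$, and you handle the cycle clause of that definition correctly.
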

This suggests that an efficient enumeration of prime paths in the desired order cannot be achieved efficiently in general. Therefore, instead of fully determining whether a vertex can be the beginning (or end) of a prime path, we focus on deriving necessary conditions that restrict the set of candidate vertices, thereby narrowing the search space in our algorithm. For a given $G=(V, E)$ we define the sets $V_{start}$ and $V_{end}$ as follows:

\begin{itemize}
    \item $v \in V_{start}$ if and only if the following four conditions hold:
    \begin{enumerate}[label=\textbf{S.\arabic*}]
    \item\label{lemma:prime_path_necessary:cond1} $in(v_1) \subseteq SC(v_1)$,
    \item \label{lemma:prime_path_necessary:cond3} $|\bigcup_{v \in in(v_1)} out(v) \cup \{v_1\}| > |in(v_1)|$, 
    \item\label{lemma:prime_path_necessary:cond5}  $|(\bigcup_{v \in in(v_1)} out(v)) \cap SC(v_1)| \geq |in(v_1)|$,
    \item\label{lemma:prime_path_necessary:cond7}  $|(\bigcup_{v \in in(v_1)} in(v)) \cap SC(v_1)| \geq |in(v_1)|$,
    \end{enumerate}
     \item $v \in V_{end}$ if and only if the following four conditions hold:
    \begin{enumerate}[label=\textbf{E.\arabic*}]
    \item \label{lemma:prime_path_necessary:cond2}  $out(v_n) \subseteq SC(v_n)$,
    \item\label{lemma:prime_path_necessary:cond4}  $|\bigcup_{v \in out(v_n) } in(v) \cup \{v_n\}| > |out(v_n)|$,
    \item\label{lemma:prime_path_necessary:cond6}  $|(\bigcup_{v \in out(v_n) } in(v)) \cap SC(v_n)| \geq |out(v_n)|$,
    \item\label{lemma:prime_path_necessary:cond8}  $|(\bigcup_{v \in out(v_n) } out(v)) \cap SC(v_n)| \geq |out(v_n)|$.
    \end{enumerate}
\end{itemize}

\ref{lemma:prime_path_necessary:cond1}-\ref{lemma:prime_path_necessary:cond7} (resp. \ref{lemma:prime_path_necessary:cond2}-\ref{lemma:prime_path_necessary:cond8}) are indeed necessary conditions for vertices to be beginnings (resp. endings) of prime paths, as the following lemma shows.

\begin{lemma}
\label{lemma:prime_path_necessary}
Let $G = (V,E)$ be a digraph and $p=(v_1, ..., v_n)$ a prime path in $G$, $v_1 \neq v_n$. Then $v_1 \in V_{start}$ and $v_n \in V_{end}$.
\end{lemma}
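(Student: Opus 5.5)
The plan is to argue directly from non-backward-extendability for $v_1$, and then obtain the statements for $v_n$ by symmetry. Since $v_1\neq v_n$, $p$ is not a simple cycle, so $p$ is a non-extendable simple path. Take any $u\in in(v_1)$. The sequence $(u,v_1,\dots,v_n)$ is a path that is neither in $SP(G)$ nor in $SC(G)$.
\begin{itemize}
\item If $u\notin V(p)$, this sequence would be a simple path, so $u\in V(p)$.
\item If $u=v_n$, the sequence $(v_n,v_1,\dots,v_n)$ is a simple cycle in the paper's sense, because $(v_n,v_1,\dots,v_{n-1})$ is a simple path. So $u\neq v_n$.
\end{itemize}
Hence $in(v_1)\subseteq\{v_1,\dots,v_{n-1}\}$; the case $u=v_1$ is a self-loop. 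Let $I=\{i : v_i\in in(v_1)\}\subseteq\{1,\dots,n-1\}$, so $|I|=|in(v_1)|$, and let $j=\max I$. The key observation is that every $v_k$ with $k\le j$ lies in $SC(v_1)$: $v_1$ reaches $v_k$ along $p$, and $v_k$ reaches $v_1$ via $v_k,\dots,v_j\to v_1$.

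\textbf{S.1.} Every $v_i$ with $i\in I$ satisfies $i\le j$, so it lies in $SC(v_1)$.

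\textbf{S.2.} For $i\in I$ we have $v_{i+1}\in out(v_i)$, which is defined since $i<n$. The map $i\mapsto v_{i+1}$ is injective because $p$ is simple, and $v_{i+1}\neq v_1$ because $i+1\ge 2$. So the union contains $|in(v_1)|$ distinct vertices different from $v_1$, and adding $v_1$ gives strictly more than $|in(v_1)|$.

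\textbf{S.3.} I would use the same injection, except at $i=j$, where $v_{j+1}$ may leave the SCC. There I substitute $v_1\in out(v_j)$. For $i\in I\setminus\{j\}$ we have $i+1\le j$, so $v_{i+1}\in SC(v_1)$ by the key observation, and $v_{i+1}\neq v_1$. Together with $v_1$ this gives $|I|$ distinct out-neighbours of in-neighbours of $v_1$, all inside $SC(v_1)$.

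\textbf{S.4.} For $i\in I$ with $i\ge2$, map $v_i$ to $v_{i-1}\in in(v_i)$. These images are distinct, and they lie in $SC(v_1)$ since $i-1<j$. If $1\in I$ (a self-loop), $v_1$ needs an image distinct from all the $v_{i-1}$.
\begin{itemize}
\item If $j=1$, then $I=\{1\}$ and $v_1\in in(v_1)\cap SC(v_1)$ suffices.
\item If $j\ge2$, choose $v_j\in in(v_1)$. It lies in $SC(v_1)$. If $v_j=v_{i-1}$ for some $i\in I$, then $i=j+1\in I$, contradicting maximality of $j$.
\end{itemize}
Either way we obtain $|I|$ distinct vertices in $\bigl(\bigcup_{v\in in(v_1)} in(v)\bigr)\cap SC(v_1)$.

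\textbf{E.1–E.4.} Reverse all edges of $G$. SCCs are preserved, $in$ and $out$ swap, and the reversed $p$ is a prime path from $v_n$ to $v_1$. Conditions S.1–S.4 for $v_n$ in the reversed graph are exactly E.1–E.4 for $v_n$ in $G$.

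The main obstacle is S.3 and S.4. The naive shift injections $i\mapsto i\pm1$ can leave $SC(v_1)$ or collide at the extremal index, and a boundary substitution (using $v_1$ itself, or $v_j$ with $j=\max I$) is needed. This is done above for both.
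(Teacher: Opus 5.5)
Your proof is correct and follows essentially the same route as the paper's. Both show $in(v_1)\subseteq V(p)\setminus\{v_n\}$, use the shift injection $v_i\mapsto v_{i\pm 1}$ along $p$, note that at most one shifted vertex can leave $SC(v_1)$ and compensate with $v_1$ itself, and obtain E.1--E.4 by symmetry. Your version is more careful than the paper's sketch: choosing $j=\max I$ makes the ``at most one vertex leaves the SCC'' claim precise, and you handle self-loops at $v_1$, which the paper's restriction to $P\subseteq V(p)\setminus\{v_1,v_n\}$ glosses over. You should still add a line for $in(v_1)=\varnothing$: there $j$ is undefined, but all four conditions hold trivially.
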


To illustrate the difference between conditions \ref{lemma:prime_path_necessary:cond3} and \ref{lemma:prime_path_necessary:cond5}, consider the graphs $G_{3,!5}$ and $G_{!3,5}$ from Figs \ref{fig:example_3_not_5} and \ref{fig:example_5_not_3}. 

\begin{figure}[ht]
  \centering
  \begin{tikzpicture}[
    node/.style={circle, draw=black, fill=white, inner sep=1.5pt, minimum size=6pt},
    arr/.style={-{Stealth[length=6pt, width=5pt]}, semithick}
]

\node[node] (v1) at (1,1) {};
\node[node] (x) at (3,0) {};
\node[node] (y) at (3,1) {};
\node[node] (z) at (3,2) {};

\node[node] (a) at (5,0) {};
\node[node] (b) at (6.5,1) {};
\node[node] (c) at (5,2) {};

\draw[arr] (c) to (b);
\draw[arr] (b) to (a);
\draw[arr] (a) to (c);
\draw[arr] (y) to (c);
\draw[arr] (x) to (a);
\draw[arr] (z) to[bend left=60] (b);
\draw[arr] (z) to[bend left=8] (v1);
\draw[arr] (x) to[bend left=8] (v1);
\draw[arr] (y) to[bend left=8] (v1);
\draw[arr] (v1) to[bend left=8] (x);
\draw[arr] (v1) to[bend left=8] (y);
\draw[arr] (v1) to[bend left=8] (z);

\node[left=2pt] at (v1) {$v_1$};
\node[below=2pt] at (x) {$x$};
\node[above=2pt] at (y) {$y$};
\node[above=2pt] at (z) {$z$};
\node[below=2pt] at (a) {$a$};
\node[right=2pt] at (b) {$b$};
\node[above=2pt] at (c) {$c$};

\end{tikzpicture}
  \caption{A digraph $G_{3,!5}$}
  \label{fig:example_3_not_5}
\end{figure}

Note that $G_{3,!5}$ consists of two SCCs, $\{v_1, y,x,z\}$ and $\{a,b,c\}$. Also $in(v_1) = \{x,y,z\}$, $\bigcup_{v \in in(v_1)} out(v) \cup \{v_1\} = \{v_1, a,b,c\}$, so  \ref{lemma:prime_path_necessary:cond3} holds for $v_1$. But $(\bigcup_{v \in in(v_1)} out(v)) \cap SC(v_1) = \{v_1\}$ so  \ref{lemma:prime_path_necessary:cond5} does not hold for $v_1$.

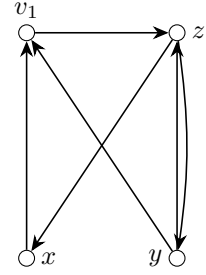
\begin{figure}[ht]
  \centering
  \begin{tikzpicture}[
    node/.style={circle, draw=black, fill=white, inner sep=1.5pt, minimum size=6pt},
    arr/.style={-{Stealth[length=6pt, width=5pt]}, semithick}
]

\node[node] (v1) at (0,4) {};
\node[node] (x) at (0,1) {};
\node[node] (y) at (2,1) {};
\node[node] (z) at (2,4) {};

\draw[arr] (x) to (v1);
\draw[arr] (y) to (v1);
\draw[arr] (y) to (z);
\draw[arr] (v1) to (z);
\draw[arr] (z) to[bend left=8] (y);
\draw[arr] (z) to (x);

 \node[above=2pt] at (v1) {$v_1$};
 \node[right=2pt] at (x) {$x$};
 \node[left=2pt] at (y) {$y$};
\node[right=2pt] at (z) {$z$};

\end{tikzpicture}
  \caption{A digraph $G_{!3,5}$}
  \label{fig:example_5_not_3}
\end{figure}

A digraph $G_{!3,5}$ is one SCC. Moreover $in(v_1) = \{x,y\}$, $\bigcup_{v \in in(v_1)} out(v) \cup \{v_1\} = \{v_1, z\}$, so  \ref{lemma:prime_path_necessary:cond3} does not hold for $v_1$. But $(\bigcup_{v \in in(v_1)} out(v)) \cap SC(v_1) = \{v_1, z\}$ so  \ref{lemma:prime_path_necessary:cond5} holds for $v_1$.

Similar examples (by reversing all edges of $G_{3,!5}$ and $G_{!3,5}$) can be given for the difference between \ref{lemma:prime_path_necessary:cond4} and \ref{lemma:prime_path_necessary:cond6}. The former shows that it may be important to include all the conditions from that lemma into our algorithm.

For $G=(V, E)$ and $v \in V_{start}(G)$ define $Ex(G,v) = (V \cup \{x_{v}\}, E')$, where $E' = E \cup \{(x_v, v) \} \cup \{(u, x_v) | u \in V_{end}(G) \wedge (u, v) \not\in E \wedge \exists (v, ..., u) \in P(G)\}$. Also, let $\mathcal{BE}_{G}$ be a predicate that, for a given path $p$, returns true if and only if $p$ is backward extendable in $G$.  

\begin{lemma}
\label{lemma:ex_g_cycles_to_g_paths}
Let $G=(V, E)$, $v_1 \in V_{start}$, $v_n \in V_{end}$. A path $(x_{v_1}, v_1, \ldots, v_n, x_{v_1})$ is a simple cycle in $Ex(G, v_1) = (V \cup \{x_{v_1}\}, E')$ if and only if $(v_1, \ldots, v_n)$ is a simple path in $G$ and $(v_n,v_1)\notin E$.
\end{lemma}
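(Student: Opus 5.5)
Both directions follow by unwinding the definitions of $Ex(G,v_1)$ and of a simple cycle. The key observation is that $x_{v_1}$ is a fresh vertex: its only out-edge is $(x_{v_1},v_1)$, and its in-edges are exactly the edges $(u,x_{v_1})$ with $u\in V_{end}(G)$, $(u,v_1)\notin E$, and some path $(v_1,\dots,u)\in P(G)$. Every other edge of $E'$ lies in $E$. I treat $n\ge 1$ throughout. For $n=1$ the condition $(v_1,v_1)\notin E$ is part of the edge rule, so the degenerate case needs no separate argument.

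($\Rightarrow$) Suppose $(x_{v_1},v_1,\dots,v_n,x_{v_1})$ is a simple cycle in $Ex(G,v_1)$.
\begin{enumerate}
\item By definition, $(x_{v_1},v_1,\dots,v_n)$ is a simple path, so $v_1,\dots,v_n$ are pairwise distinct and all different from $x_{v_1}$.
\item Each edge $(v_i,v_{i+1})$ joins two vertices of $V$, so it lies in $E'\setminus\{\text{edges incident to }x_{v_1}\}\subseteq E$. Hence $(v_1,\dots,v_n)$ is a simple path of $G$.
\item The closing edge $(v_n,x_{v_1})\in E'$ cannot be $(x_{v_1},v_1)$ and is not in $E$, so it must come from the third set in the definition of $E'$. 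That set requires $(v_n,v_1)\notin E$.
\end{enumerate}

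($\Leftarrow$) Suppose $(v_1,\dots,v_n)$ is a simple path in $G$ with $(v_n,v_1)\notin E$.
\begin{enumerate}
\item Each $(v_i,v_{i+1})\in E\subseteq E'$, and $(x_{v_1},v_1)\in E'$ by construction.
\item For $(v_n,x_{v_1})\in E'$ we check the three requirements: $v_n\in V_{end}(G)$ is a hypothesis of the lemma, $(v_n,v_1)\notin E$ is assumed, and $(v_1,\dots,v_n)$ itself witnesses $\exists (v_1,\dots,v_n)\in P(G)$.
\item Since $x_{v_1}\notin V$, the sequence $(x_{v_1},v_1,\dots,v_n)$ has distinct vertices. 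Closing it with the edge back to $x_{v_1}$ gives a simple cycle in the sense of the preliminaries.
\end{enumerate}

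The only point needing care is the forward direction's claim that the inner edges belong to $E$. This rests on $E'\setminus E$ consisting solely of edges incident to $x_{v_1}$, together with simplicity forbidding $x_{v_1}$ from appearing among $v_1,\dots,v_n$. I also note that the hypothesis $v_1\in V_{start}$ is needed only so that $Ex(G,v_1)$ is defined; it plays no further role in the argument.
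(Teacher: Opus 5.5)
Your proof is correct and follows the same route as the paper: both directions are a direct unwinding of the definitions of $Ex(G,v_1)$ and of a simple cycle. You are slightly more explicit than the paper in two places: in ($\Rightarrow$) you justify that the inner edges lie in $E$, and in ($\Leftarrow$) you check all three conditions for $(v_n,x_{v_1})\in E'$, including the hypothesis $(v_n,v_1)\notin E$, which the paper's proof uses only implicitly.
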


Also, observe the following property.

\begin{proposition}
\label{prop:forward_extend_with_ex_g}
    Let $(x_{v_1}, v_1, \ldots, v_n, x_{v_1})$ be a simple cycle in $Ex(G,v_1)$. A path $p =(v_1, \ldots, v_n)$ is backward extendable in $G$ if and only if there exists $v' \in SC(v_1) \cap in(v_1)$ such that $v' \notin V(p)$. 
\end{proposition}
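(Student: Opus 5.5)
The proof reduces backward extendability to the existence of an in-neighbour of $v_1$ outside $V(p)$, and then uses condition \ref{lemma:prime_path_necessary:cond1} to place every such in-neighbour in $SC(v_1)$. The first step is to record what the hypothesis gives.

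\textbf{Consequences of the hypothesis.} Since $(x_{v_1}, v_1, \ldots, v_n, x_{v_1})$ is a simple cycle in $Ex(G,v_1)$, the graph $Ex(G,v_1)$ is defined, so $v_1 \in V_{start}$. Moreover, the edge $(v_n, x_{v_1})$ belongs to $E'$. By the definition of $E'$, every edge entering $x_{v_1}$ comes from a vertex of $V_{end}(G)$, so $v_n \in V_{end}$. (Here $v_n \neq x_{v_1}$, since the cycle is simple.) Lemma~\ref{lemma:ex_g_cycles_to_g_paths} now applies and gives two facts:
\begin{itemize}
\item $p=(v_1,\ldots,v_n)$ is a simple path in $G$;
\item $(v_n,v_1)\notin E$.
\end{itemize}
The same fact $(v_n,v_1)\notin E$ can also be read off directly from the condition $(u,v)\notin E$ in the definition of $E'$.

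\textbf{Backward extension cannot close a cycle.} By definition, $p$ is backward extendable iff there is $v \in V$ with $(v, v_1, \ldots, v_n) \in SP(G) \cup SC(G)$. Consider first the case $(v, v_1,\ldots,v_n) \in SC(G)$. A simple cycle has the form $(w_1,\ldots,w_k,w_1)$, so we would need $v = v_n$, and the first edge of the sequence would be $(v_n, v_1)\in E$. This contradicts the previous step, so this case never occurs. Hence $p$ is backward extendable iff there is $v$ with $(v, v_1, \ldots, v_n)$ a simple path. Since $p$ is already simple, this holds iff there exists $v \in in(v_1)$ with $v \notin V(p)$.

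\textbf{Equivalence via S.1.} It remains to show that such a $v$ exists iff there is one in $SC(v_1)\cap in(v_1)$.
\begin{itemize}
\item ($\Leftarrow$) A vertex $v' \in SC(v_1)\cap in(v_1)\setminus V(p)$ is in particular a vertex of $in(v_1)\setminus V(p)$, so $p$ is backward extendable.
\item ($\Rightarrow$) Since $v_1\in V_{start}$, condition \ref{lemma:prime_path_necessary:cond1} gives $in(v_1)\subseteq SC(v_1)$. Therefore any witness $v\in in(v_1)\setminus V(p)$ already lies in $SC(v_1)\cap in(v_1)$.
\end{itemize}

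The only point needing care is the second step. One must rule out the cycle option in the definition of backward extendable, and this is exactly what $(v_n,v_1)\notin E$, obtained from Lemma~\ref{lemma:ex_g_cycles_to_g_paths}, provides. The rest follows directly from the definitions.
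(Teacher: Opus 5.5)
Your proof is correct and complete. The paper states this proposition only as an observation and gives no proof; your argument supplies the reasoning it implicitly relies on: $(v_n,v_1)\notin E$, forced by the definition of $E'$, rules out backward extension to a simple cycle, and then condition \ref{lemma:prime_path_necessary:cond1} for $v_1\in V_{start}$ makes the restriction to $SC(v_1)$ automatic.
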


\section{An approach based on Johnson's algorithm}\label{sec:Cycles}

\subsection{Overview of our approach}
Our goal is to efficiently enumerate all prime paths in a directed graph without explicitly generating the full set of simple paths, which is typically exponential in size. Our approach is based on three key observations:
\begin{itemize}
    \item \textbf{Structural decomposition via SCCs} -- every simple cycle is fully contained within a single SCC, while every non-cyclic prime path can be ``split'' using a sequence of SCCs that form a path in the condensation graph.
    \item \textbf{Constraints for being a prime path endpoint} -- using the necessary conditions (see Lemma \ref{lemma:prime_path_necessary}), we reduce the number of possible prime path endpoints to check. 
    \item \textbf{SCC boundary as a pruning checkpoint} -- when a path leaves an SCC in $G$ during the algorithm execution, a backward extendability check sometimes allows us to prune the subpaths search tree.
\end{itemize}

Based on these observations, we propose an algorithm for finding prime paths, inspired by Johnson's algorithm \cite{johnson1975}. However, our method does not enumerate all paths and filter them afterward. Instead, it performs constraint-driven cycle and path generation, discarding infeasible paths during construction.

\subsection{The algorithm}

Algorithm \ref{alg:ptime_path_generator} generates all prime paths for a given $G=(V, E)$. 

\begin{algorithm}[!ht]
\caption{\textsc{PrimePaths}}
\label{alg:ptime_path_generator}
\begin{algorithmic}[1]
\Require $G = (V, E)$
\Ensure  prime paths of $G$ (one by one)
\State \label{alg:ptime_path_generator:simple_cycles} $P = \textsc{SimpleCycles(G)}$ 
\For{$p\in P$}
    \For {$r \in R(p)$} \Comment{all rotations of $p$ (defined in Section \ref{sec:preliminaries})}
        \State \textbf{yield} $r$
    \EndFor
\EndFor
\State \label{alg:ptime_path_generator:non_extendable_paths} $P = \textsc{NonExtendableSimplePaths(G)}$ 
\For{$p\in P$}
    \State \textbf{yield} $p$
\EndFor
\end{algorithmic}
\end{algorithm}

It uses Algorithm \ref{alg:jonsons_algorithm_core}, which generates all simple cycles, and Algorithm \ref{alg:jonsons_algorithm_non_extendable}, which generates all non-extendable simple paths. 

\begin{algorithm}[!ht]
\caption{\textsc{SimpleCycles}}
\label{alg:jonsons_algorithm_core}
\begin{algorithmic}[1]
\Require $G = (V, E)$
\Ensure simple cycles of $G$ (one by one)

\While{$\exists v \in V$}
   \State $SCC_{v} \leftarrow \text{unique } SCC \in \textsc{Tarjan}(V,E) \text{\ such that\ }  v\in SCC$
   \State $(BlockedDict, BlockedSet, Path) \leftarrow$  (\{\ \}, $\varnothing$, [\ ]) \Comment empty structures
   \State \textbf{yield from} \textsc{Circuit($SCC_v$, $v$, $BlockedDict$, $BlockedSet$, $Path$, $\epsilon$, $\epsilon$)}
   \State $(V,E)  \leftarrow \text{a subgraph of $(V,E)$ induced by $V
   \setminus \{v\}$}$
\EndWhile
\end{algorithmic}
\end{algorithm}

\begin{algorithm}[!ht]
\caption{\textsc{NonExtendableSimplePaths}}
\label{alg:jonsons_algorithm_non_extendable}
\begin{algorithmic}[1]
\Require $G = (V, E)$
\Ensure non-extendable simple paths of $G$ (one by one)
\For{$v \in V_{start}(G)$}
    \State compute $Ex(G, v)$
    \State $G_{ext} \leftarrow $ SCC of $Ex(G,v)$ containing $x_v$
    \State $(BD, BS, Path)$ = (\{\ \}, $\varnothing$, [\ ])
    \State $P \leftarrow \textsc{Circuit($G_{ext}$, $x_v$, $BD$, $BS$, $Path$, $G$,$v$)}$
    \For{$p\in P$}
        \State remove $x_v$ from $p$
        \If{p is non-extendable in $G$}
            \State \textbf{yield} $p$
        \EndIf
    \EndFor
\EndFor
\end{algorithmic}
\end{algorithm}
 
To find simple cycles, the main loop of Algorithm \ref{alg:jonsons_algorithm_core} selects a vertex, finds the SCC containing that vertex (for example, using Tarjan's algorithm \cite{tarjan_r}), and runs on that SCC Algorithm \ref{alg:jonsons_circuit}, responsible for computing all simple cycles containing a given $v\in V$. Finally, the processed vertex can be removed from the graph, because all cycles containing it have already been found.

 After running Algorithm \ref{alg:jonsons_algorithm_core} and getting all the simple cycles of $G$, we can run Algorithm \ref{alg:jonsons_algorithm_non_extendable} to get all the non-extendable simple paths of $G$. For every $v \in V_{start}$, first we construct a graph $Ex(G, v)$ and then execute Algorithm \ref{alg:jonsons_circuit} for $x_v$. Next, we remove $x_v$ from each simple cycle found and filter out the forward extendable simple paths. 

The function \textsc{Circuit}, used by both above-mentioned algorithms, is shown in Algorithm \ref{alg:jonsons_circuit} (it uses the function \textsc{Unblock} listed in Algorithm \ref{alg:unblock}).

\begin{algorithm}[!ht]
\caption{\textsc{Circuit}}
\label{alg:jonsons_circuit}
\begin{algorithmic}[1]
\Require $G=(V, E)$, $v \in V$,  $BlockedDict$, $BlockedSet$, $Path$, $G' = (V',E')$ - might be $\epsilon$, $v' \in V'$ - might be $\epsilon$
\Ensure simple cycles (possibly such that its subpaths are not backward extendable) in $G$ containing $v$
\State $proceed \gets false$
\If{$v \notin V(G') \lor v' = \epsilon \lor SC_{G'}(v') = SC_{G'}(v)$}
\State $proceed \gets true$
\ElsIf{$(SC_{G'}(v') \neq SC_{G'}(v) \land !\mathcal{BE}_{G'}(Path[1:]))$} \label{alg:jonsons_circuit:modified}\Comment{see Prop. \ref{prop:forward_extend_with_ex_g}}
\State $proceed \gets true$
\State $G' \gets \epsilon, v' \gets \epsilon$
\EndIf
\If{$proceed$}
\State $found \leftarrow False$
\State append $v$ \text{to} $Path$
\State add $v$ to $BlockedSet$
\For{ $w\in V$ such that $(v, w)\in E$}
    \If{$w$ = head($Path$)} \Comment head($x$) = the first element of $x$
        \State \textbf{yield} $Path + w$ \Comment return a list representing a cycle from $w$ to $w$ 
        \State $found \gets True$ 
    \ElsIf{$w\not\in$ $BlockedSet$} 
        \State \textbf{yield} \textsc{Circuit($G$, $w$, $BlockedDict$, $BlockedSet$, $Path$, $G'$, $v'$)} 
        \State $found \leftarrow True$
    \EndIf
\EndFor
\If{$found$}
    \State $\textsc{Unblock}(v, BlockedSet, BlockedDict)$
\Else \label{alg:jonsons_circuit:blocking}
    \For{$w\in V$ such that $(v, w)\in E$}
        \If{$v \notin BlockedDict(w)$} \Comment $BlockedDict(w)$ is a list
            \State append $v$ to $BlockedDict(w)$  
        \EndIf
    \EndFor
\EndIf
\State remove the last element from $Path$
\EndIf
\end{algorithmic}
\end{algorithm}

The critical pruning step occurs at lines 2-6. When the current vertex $v$ belongs to a different SCC than the starting vertex $v'$, we apply Proposition \ref{prop:forward_extend_with_ex_g} to check whether the current partial path $Path[1:]$ (i.e., $Path$ with its first vertex removed) is already backward extendable in $G$. If it is the case, continuing this path cannot yield a prime path, so we stop the recursive calls.

\begin{lemma}
\label{lemma:johnson_not_modified}
Let $G = (V,E)$ be a digraph, $v \in V$ and $SCC_v$ be an SCC of $G$ containing $v$. Then $\textsc{Circuit($SCC_v$,$v$, $\{\}$,$[]$,$\varnothing$, $\epsilon$, $\epsilon$)}$ returns all simple cycles of $SCC_v$.
\end{lemma}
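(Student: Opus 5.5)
The plan is to reduce the statement to the correctness of Johnson's original \textsc{Circuit} procedure \cite{johnson1975}. The first step is to observe that when \textsc{Circuit} is called with $G' = \epsilon$ and $v' = \epsilon$, the condition in line 2 (namely $v' = \epsilon$) holds in every recursive call. This is because the parameters $G'$ and $v'$ are passed unchanged down the recursion. Hence $proceed$ is always set to true, the branch at line \ref{alg:jonsons_circuit:modified} is never entered, and the procedure coincides line by line with Johnson's \textsc{Circuit} run on $SCC_v$ with start vertex $v$ and empty blocking structures. So it suffices to reprove, or cite, Johnson's invariant in our notation. We should also read the statement's ``all simple cycles of $SCC_v$'' as ``all simple cycles of $SCC_v$ through $v$, each output exactly once, written as starting and ending at $v = \mathrm{head}(Path)$''. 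This is exactly what Algorithm \ref{alg:jonsons_algorithm_core} needs, since it subsequently deletes $v$.

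Soundness is straightforward. Every appended vertex $w$ is an out-neighbour of the last vertex of $Path$. A vertex is added to $BlockedSet$ when appended, and it remains blocked while it is on $Path$, since \textsc{Unblock} is only called for $v$ at the end of its own call. Therefore $Path$ is always a simple path starting at $v$. A yield happens only when the current vertex has an edge back to $\mathrm{head}(Path)$, so each output is a simple cycle through $v$ contained in $SCC_v$. Distinctness follows because the recursion tree branches on distinct successors, so distinct leaves correspond to distinct vertex sequences.

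Completeness is the main obstacle, and here I would follow Johnson's blocking invariant. The invariant states that, at any moment, a vertex $u \notin Path$ is in $BlockedSet$ only if every path from $u$ to $v$ in $SCC_v$ intersects $Path$ at a vertex other than $v$. Moreover, $BlockedDict$ records enough dependencies that, once the obstructing vertex of $Path$ is popped and unblocked, \textsc{Unblock} cascades and frees $u$. I would prove this by induction on the recursion. The key case is a call on $x$ that finds no cycle: every successor $w$ of $x$ either is blocked or led to no cycle, so any $x$-to-$v$ path avoiding $Path$ is currently impossible. Registering $x$ in $BlockedDict(w)$ for each successor $w$ guarantees that $x$ is released exactly when some $w$ becomes free.

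Given the invariant, completeness follows. Take any simple cycle $(v, u_1, \dots, u_k, v)$ and suppose by induction that $Path = (v, u_1, \dots, u_i)$ is reached. The segment $u_{i+1}, \dots, u_k, v$ avoids $Path \setminus \{v\}$. By the invariant, $u_{i+1}$ cannot be blocked, so the loop recurses into it and the next prefix is reached. At $u_k$ the edge to $\mathrm{head}(Path)$ yields the cycle.
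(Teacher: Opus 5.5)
Your proposal follows the paper's proof, which is exactly your first step: because $v'=\epsilon$ is passed unchanged into every recursive call, the branch at line~\ref{alg:jonsons_circuit:modified} is never entered and correctness is inherited from Johnson's algorithm, and you are right that the conclusion must be read as ``all simple cycles of $SCC_v$ through $v$''. If you keep your self-contained re-proof, two points need care: first, ``a vertex stays blocked while on $Path$'' does not follow from \textsc{Unblock} being invoked only when a vertex is popped, since \textsc{Unblock} recurses through $BlockedDict$ (you must check that every link of such a cascade was created after the popped vertex was pushed, so the cascade cannot reach a vertex lying below it on $Path$); second, the listed \textsc{Circuit} sets $found \gets True$ after every recursive call, not only after one that produced a cycle, so it is not literally Johnson's procedure line by line, although soundness and completeness both survive this more eager unblocking.
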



The computational complexity of Algorithm \ref{alg:ptime_path_generator} is the focus of the theorem below.

\begin{theorem}\label{thm:algorithm_correctness}
Let $G = (V, E)$ be a digraph, $c$ be the number of its simple cycles, and $s$ be the number of its simple paths.  Algorithm \ref{alg:ptime_path_generator} (\textsc{PrimePaths}) computes all prime paths for $G$ and runs in $O((|V|+|E|)(c+s))$.
\end{theorem}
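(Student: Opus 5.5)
The proof splits into correctness and running time. For correctness, I separate prime paths into simple cycles and non-extendable simple paths, and handle each phase of Algorithm \ref{alg:ptime_path_generator} separately.

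\textbf{Cycles.} Algorithm \ref{alg:jonsons_algorithm_core} repeatedly picks a vertex $v$ and calls \textsc{Circuit} on the SCC of $v$ in the current graph. By Lemma \ref{lemma:johnson_not_modified}, each call returns all simple cycles of that SCC, starting at $v$ by construction. Every simple cycle lies in a single SCC (Proposition \ref{prop:simple_cycles}). Consider the first of its vertices chosen by the loop: at that moment none of the cycle's vertices has been deleted, so the cycle is found exactly once. Outputting all rotations $R(p)$ then gives every simple cycle together with every choice of starting vertex. That is exactly the set of prime paths satisfying condition \ref{pp_characterization:cond_1} of Theorem \ref{thm:pp_characterization}.

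\textbf{Non-extendable paths.} Let $p=(v_1,\dots,v_n)$ be a non-extendable simple path with $v_1\neq v_n$; the case $n=1$ is treated the same way or is trivial. The argument has three steps.
\begin{enumerate}
\item By Lemma \ref{lemma:prime_path_necessary}, $v_1\in V_{start}$ and $v_n\in V_{end}$. Also $(v_n,v_1)\notin E$, since otherwise $p$ would extend to a simple cycle.
\item By Lemma \ref{lemma:ex_g_cycles_to_g_paths}, $(x_{v_1},v_1,\dots,v_n,x_{v_1})$ is a simple cycle in $Ex(G,v_1)$, lying in the SCC of $x_{v_1}$. Conversely, every cycle returned maps back to a simple path of $G$.
\item The final filter keeps only non-extendable paths, so there are no false positives. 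The remaining issue is completeness: the modified check at line \ref{alg:jonsons_circuit:modified} must never prune a prefix of a prime path.
\end{enumerate}
For completeness I use Proposition \ref{prop:forward_extend_with_ex_g}. If a prefix $q$ of $p$ were backward extendable, there would be some $v'\in in(v_1)\cap SC(v_1)$ outside $V(q)$. Once the path has left $SC(v_1)$, it can never re-enter it, because $Cond(G)$ is acyclic. Hence $v'\notin V(p)$ as well, and $p$ would be backward extendable, a contradiction. The check is performed only at the first SCC crossing, after which $G',v'$ are reset, so this argument covers it.

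I expect the main obstacle to be Johnson's blocking mechanism. Pruning a branch could leave vertices blocked, because the \emph{found} flag is not set when the call returns without proceeding. This might cause cycles to be missed later. I would try to argue that a pruned vertex $w$ gets blocked only with respect to continuations through the current prefix, all of which are backward extendable. Any other prefix that reaches $w$ unblocks it by the standard \textsc{Unblock} invariant. If that fails, I would treat a pruned call as not touching the blocked structures, so that Johnson's invariant is preserved.

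\textbf{Complexity.} I use Johnson's bound: one \textsc{Circuit} call from a root costs $O(|V|+|E|)$ between consecutive outputs, plus setup.
\begin{itemize}
\item \emph{Cycle phase:} the total is $O((|V|+|E|)(c+|V|))$, and rotations cost $O(|V|)$ per cycle.
\item \emph{Path phase:} for each $v\in V_{start}$, building $Ex(G,v)$ takes $O(|V|+|E|)$ using a reachability search from $v$. The extended graph has $O(|V|)$ extra edges. Each cycle output corresponds to a distinct simple path of $G$, and the backward-extendability check and final filter each cost $O(|V|+|E|)$ per output or per pruning event.
\end{itemize}
Every cycle output in the path phase is a distinct simple path, so there are at most $s$ of them. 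The $|V|$ setup terms are absorbed because each vertex is itself a simple path, so $|V|\le s$. Summing both phases gives $O((|V|+|E|)(c+s))$.
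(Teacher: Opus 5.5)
Your overall route is the paper's: Johnson on each SCC for the cycles; Lemma~\ref{lemma:prime_path_necessary}, Lemma~\ref{lemma:ex_g_cycles_to_g_paths} and the final filter for the non-extendable paths; Johnson's bound summed over roots. The problem is the completeness step you add for the pruning at line~\ref{alg:jonsons_circuit:modified}, which fails as written. You apply Proposition~\ref{prop:forward_extend_with_ex_g} to the prefix $q=Path[1:]=(v_1,\dots,v_{i-1})$ at the first SCC crossing. That proposition only concerns paths $p$ for which $(x_{v_1},p,x_{v_1})$ is a simple cycle of $Ex(G,v_1)$, and this never holds for $q$. Its last vertex has the out-neighbour $v_i\notin SC(v_{i-1})$, so it violates E.1, is not in $V_{end}$, and has no edge to $x_{v_1}$. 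Worse, the implication you need is false for backward extendability as the paper defines it, because extension into a simple cycle counts. If $(v_{i-1},v_1)\in E$, then $q$ extends backward to $(v_{i-1},v_1,\dots,v_{i-1})\in SC(G)$ even though $in(v_1)\cap SC(v_1)\subseteq V(q)$. Concretely, take $E=\{(a,b),(b,a),(b,c)\}$. Then $(a,b,c)$ is prime, $a\in V_{start}$ and $c\in V_{end}$. The only branch from $x_a$ reaches $c$ with $Path[1:]=(a,b)$, and $(a,b)$ is backward extendable to $(b,a,b)$. With $\mathcal{BE}_G$ read literally, that branch is pruned and $(a,b,c)$ is never output. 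You must therefore state that the test is $in(v_1)\cap SC(v_1)\nsubseteq V(Path[1:])$; equivalently, by S.1, that $q$ extends backward to a simple \emph{path}. This is what the pointer to the proposition intends. With that reading your argument is correct: acyclicity of $Cond(G)$ gives $V(p)\cap SC(v_1)=V(q)$, so a witness $v'\notin V(q)$ is a witness $v'\notin V(p)$. It also fills a step the paper's proof skips entirely.

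Your blocking worry, by contrast, does not arise for Algorithm~\ref{alg:jonsons_circuit} as written. A pruned call touches none of $Path$, $BlockedSet$ and $BlockedDict$. The caller executes $found\gets True$ after every recursive call, whatever its outcome. So a vertex stays blocked only at a genuine dead end, and the soundness invariant of the blocking is unaffected. Under Johnson's original return-value semantics the danger would be real. But there it is the caller of the pruned call that stays blocked, never $w$, so your first proposed remedy looks at the wrong vertex.

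Your complexity step, which the paper's proof shares, does have a hole. Johnson's ``$O(|V|+|E|)$ between consecutive outputs'' cannot be quoted here, for two reasons. First, the exploration inside $SC(v_1)$ can be large and end only in pruned crossings, producing no output. Second, this \textsc{Circuit} keeps only dead ends blocked, not every vertex from which no cycle was found as Johnson's does. You also never bound the number of pruning events you charge for.

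What works is counting invocations. Each call of \textsc{Circuit} is determined by the current $Path$ together with its vertex argument. Dropping $x_v$, these form a simple path of $G$ from the search's start vertex, distinct for distinct calls. So both phases together make $O(s)$ calls, each costing $O(|V|+|E|)$; this includes its at most one $\mathcal{BE}$ test and the amortized \textsc{Unblock} work. Adding $O(|V|)$ per output rotation, plus the setup you already absorb via $|V|\le s$, gives $O((|V|+|E|)(c+s))$.
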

Obviously, the number of prime paths can be exponential (as argued using $D_n$ graphs construction from Fig. \ref{fig:intro_example}), so the bound is nearly tight. Due to Theorem \ref{thm:np_poly_delay} the algorithm has no polynomial delay, but the amortized delay for the algorithm would be $O((|V|+|E|)\frac{(c+s)}{p})$ where $p$ is a number of prime paths, which, assuming $O(s) = p$ is a reasonable tradeoff. For DAGs, we have even: 

\begin{corollary}
    \label{cor:poly_delay_dags}
    If $G = (V, E)$ is acyclic, then  Algorithm \ref{alg:ptime_path_generator} has polynomial delay.
\end{corollary}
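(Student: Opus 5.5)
We have to show that, when $G$ is acyclic, Algorithm~\ref{alg:ptime_path_generator} has polynomial delay. The plan is to check that in a DAG every piece of work the algorithm does is either polynomial or is immediately followed by a yielded prime path.

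\textbf{Step 1: the cycle phase is cheap.} Since $G$ is acyclic, every SCC is a single vertex with no loop, so $c=0$. Algorithm~\ref{alg:jonsons_algorithm_core} therefore iterates over the vertices and, for each one, runs Tarjan and a trivial call to \textsc{Circuit}. This phase costs $O(|V|(|V|+|E|))$ and yields nothing. Computing $V_{start}$ and $V_{end}$ takes polynomial time as well.

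\textbf{Step 2: simplify the path phase.} In a DAG every SCC is a singleton, so condition \ref{lemma:prime_path_necessary:cond1} forces $in(v)=\varnothing$. Conditions \ref{lemma:prime_path_necessary:cond3}--\ref{lemma:prime_path_necessary:cond7} are then trivial, so $V_{start}$ is exactly the set of sources. Symmetrically, $V_{end}$ is exactly the set of sinks.

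In a DAG a simple path $(v_1,\dots,v_n)$ is non-extendable if and only if $v_1$ is a source and $v_n$ is a sink. Indeed, it cannot be closed into a cycle, and any in-neighbour of $v_1$ or out-neighbour of $v_n$ lies off the path (by acyclicity), so it would extend the path.

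Hence for a source $v$, the graph $Ex(G,v)$ adds $x_v$, the edge $(x_v,v)$, and edges $(u,x_v)$ from every sink $u$ reachable from $v$ (the condition $(u,v)\notin E$ holds automatically). By Lemma~\ref{lemma:ex_g_cycles_to_g_paths}, the simple cycles through $x_v$ correspond bijectively to source-to-sink paths starting at $v$, and each of them is a prime path. Acyclicity also makes the pruning at line~\ref{alg:jonsons_circuit:modified} harmless: $\mathcal{BE}_{G}$ is false on every prefix starting at a source, so no branch is cut and no output is lost. The filter ``non-extendable'' never rejects anything.

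\textbf{Step 3: bound the delay.} Inside $G_{ext}$, which is the SCC of $x_v$ (consisting of the vertices reachable from $v$ that can reach a sink, i.e.\ all vertices reachable from $v$), every vertex lies on some cycle through $x_v$. This is exactly the setting of Johnson's analysis, which guarantees $O(|V|+|E|)$ time between consecutive cycles found by \textsc{Circuit}. In fact, in a DAG every recursive branch from a vertex reachable from $v$ can be continued to a sink and hence back to $x_v$. So no search branch is fruitless, and the time between outputs is $O(|V|+|E|)$ plus the cost of the extra SCC checks. Those checks take polynomial time, e.g.\ $O(|V|+|E|)$ per call with a precomputed SCC labelling, giving a polynomial per-output cost.

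Between two consecutive sources $v$, the only extra work is constructing $Ex(G,v)$ and its SCC, which is polynomial. Each source yields at least one path, since following out-edges from it eventually reaches a sink. So the gap between outputs of consecutive sources is also polynomial.

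The main obstacle is Step 3: carefully arguing that the blocking mechanism of Johnson's \textsc{Circuit} retains its $O(|V|+|E|)$-per-cycle guarantee once the extra SCC test and the $G'$, $v'$ parameters are added. To handle it, I would reuse the argument behind Lemma~\ref{lemma:johnson_not_modified}, noting that in the acyclic case the added test always sets $proceed$ to true, so the procedure coincides with plain Johnson up to polynomial overhead per call.
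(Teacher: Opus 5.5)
Your proposal is correct and follows essentially the same route as the paper's proof. Both argue that the cycle phase is polynomial and yields nothing, that $V_{start}$ and $V_{end}$ reduce to the sources and sinks, that every source yields at least one prime path, and that Johnson's $O(|V|+|E|)$ delay applies on the single SCC $G_{ext}$. Your additional observations, that the SCC-boundary pruning test and the final non-extendability filter never discard anything and that no search branch in a DAG is fruitless, elaborate the paper's argument without changing it.
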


In the statistics of our dataset 
(Table~\ref{tab:cfg_stats}), $31.36\%$ of CFGs are 
DAGs, making this a particularly relevant property 
in practice.

\subsection{Comparison with previous approaches}
A well-known approach to computing prime paths is the Ammann and Offutt algorithm \cite{1Ammann2008} (Algorithm \ref{alg:ammann_offutt} in Appendix \ref{sec:app-c}), which serves as a baseline for many subsequent works. It can be easily optimized to reduce unnecessary computations. The procedure begins by enumerating the set $P$ of all non-forward-extendable simple paths in the graph. Next, the algorithm iteratively selects the longest paths from $P$ and removes all their sub-paths from $P$. While conceptually straightforward, this approach is computationally expensive. In particular, the need to repeatedly compare paths within $P$ leads to at least quadratic complexity with respect to $|P|$, which can be prohibitively large for realistic graphs.
In contrast, the improved approach (Algorithm \ref{alg:ammann_offutt_improved} in Appendix \ref{sec:app-c}) adopts a more efficient iterative strategy. Instead of maintaining the entire set of simple paths, it keeps only the set \textit{newExPaths}, which contains non-forward-extendable paths of a fixed length $k$ in the $k$-th iteration. At each step, the algorithm also identifies prime paths of length $k$ and removes them from further consideration. As a result, the search space is significantly reduced in every iteration. This eliminates the need for expensive global comparisons over all simple paths, making the algorithm substantially more efficient in practice.

A different algorithm is proposed in \cite{Fazli2019}. The key idea is to decompose the input graph into SCCs. Prime paths are first computed independently within each SCC, using a method similar to the Ammann-Offutt approach. Then, by constructing the condensation graph $\mathrm{Cond}(G)$, in which each SCC is treated as a single vertex, the locally computed prime paths are combined to form prime paths for the entire graph.

From a parallelization perspective, our approach offers notable advantages. Algorithms~\ref{alg:jonsons_algorithm_core} and \ref{alg:jonsons_algorithm_non_extendable} are naturally parallelizable. In particular, Algorithm~\ref{alg:jonsons_algorithm_non_extendable} can be parallelized by replacing the outer loop over starting vertices with a parallel loop. Since each prime path originates from a distinct starting vertex, the corresponding search spaces are disjoint, allowing for efficient parallel execution without synchronization overhead.
This stands in contrast to the approach from \cite{Fazli2019}, where parallelization is limited by the structure of SCCs. In practical control-flow graphs, SCCs tend to be small (with an average size of 7.33 nodes in our dataset), which restricts the available parallelism. Although a later modification of \cite{Fazli2019} introduces parallel computation within SCCs \cite{Fazli2023}, the benefits remain limited when SCCs are small or when the graph is acyclic. In particular, for DAGs, where SCCs are trivial, this approach offers little opportunity for parallel speedup.
By contrast, Algorithm~\ref{alg:jonsons_algorithm_core} allows parallel computation of cycles across different SCCs, while Algorithm~\ref{alg:jonsons_algorithm_non_extendable} enables parallel exploration from multiple starting vertices regardless of the SCC structure.

Unlike Fazli \cite{Fazli2019}, which uses SCCs for decomposition but completes path generation before applying extendability filters, our approach treats SCC boundaries as pruning checkpoints during traversals. This allows us to discard infeasible paths during path construction, rather than after their completion. In practice, this difference is significant -- for graphs with complex SCC structures, the pruned subtrees can be exponentially large.

\section{Experiments for Prime Paths Generation}\label{sec:Exp}

We conducted experiments to compare the prime path generation time with the Ammann and Offutt algorithm \cite{ammann2008} (Algorithm \ref{alg:ammann_offutt}), its improved version available at github.com/heshenghuan\-/Prime-Path-Coverage (Algorithm \ref{alg:ammann_offutt_improved}), and an algorithm proposed in \cite{Fazli2019}. We do not compare with \cite{Fazli2023} because it offers no new ideas beyond parallelization. The time comparison with \cite{Fazli2019} is incomplete, though, due to the number of timeouts in the \cite{Fazli2019} approach. All algorithms are implemented in Diblob package \cite{zelek2024} and are available in supplementary material. All experiments were performed on CFGs created from  C++ and Python code from GitHub. We selected repositories based on their popularity, choosing the top 1,000 projects from 2016-2025. For C++ functions, CFGs were generated from the compiled intermediate representation. Each project was compiled using the \texttt{clang} compiler~\cite{llvmclang}. Each CFG vertex in this dataset corresponds to a basic block in the source code. 

For Python functions, CFGs were derived directly from disassembled bytecode. We utilized the \texttt{bytecode} package, specifically its \texttt{ControlFlowGraph} class~\cite{bytecode}, to construct CFGs from Python functions. After \texttt{bytecode} transformation of Python code to the intermediate language, the CFGs were generated using that intermediate representation. As in the previous case, each CFG vertex corresponds to a single basic block of code. We present basic summary statistics of our dataset in Table \ref{tab:cfg_stats} and the scatterplot of the number of nodes versus the number of edges of graphs in our dataset in Fig. \ref{fig:plot}.

The summary statistics indicate that approximately 30\% of the analyzed graphs are DAGs. This observation is important because the algorithm in \cite{Fazli2019} is expected to perform only moderately well on such inputs. Moreover, the experimental evaluation in \cite{Fazli2019} is based on a limited dataset of only eight graphs drawn from a benchmark repository. In contrast, our experiments are conducted on a substantially larger and more diverse set of real-world CFGs extracted from actual software projects. Therefore, we argue that our evaluation provides a more realistic and representative assessment of algorithmic performance in practical scenarios. The dataset is a part of supplementary material. All experiments were run on a MacBook Air with an Apple M2 processor (8 cores) and 24 GB RAM.

The experimental results for CFGs from our dataset (Tables \ref{tab:results_not_many_prime_paths} -- \ref{tab:results_two_best}) show a clear and consistent advantage of our approach across all CFG size ranges. Timeouts are excluded from runtime statistics (mean/median), and we report their counts separately. Tables \ref{tab:results_small} -- \ref{tab:results_very_big} reveal that as graph size increases, competing methods fail to complete within the timeout on the majority of instances. For CFGs with 15-50 nodes, the Ammann-Offutt algorithm exceeds the timeout on 72\% of instances, and Fazli on 98\%, making the median speedup undefined for these methods. Our approach completes all instances with zero timeouts across every size category. For the largest graphs tested (201-1000 nodes), both competing methods time out on every instance, while our approach completes in under 3 minutes on average. We set the timeout to max($k \times our\_time$, $x$ min) per instance (different for each size of input graphs - for larger instances the $k$ was chosen as smaller due to significantly longer execution time). This adaptive criterion was chosen because both Fazli \cite{Fazli2019} and the original Ammann–Offutt algorithm \cite{ammann2008} exhibited runtimes several orders of magnitude larger than our approach on moderate instances, making a fixed global timeout either too restrictive (starving competitors on easy instances) or experimentally infeasible (hours of wall time per instance on larger ones). The $k$ multiplier ensures each competing method receives substantially more time than our approach requires, providing a conservative stopping criterion given the observed performance gap.

\subsection{Experiments for $D_n$ Graphs}
To evaluate the performance of all implemented algorithms in a simple yet challenging setting, where the number of prime paths grows exponentially, we conducted experiments on $D_n$ graphs (see Fig.~\ref{fig:intro_example}), which are DAGs. Additionally, we considered an extended version of $D_n$ in which the source and target vertices are connected, yielding a single SCC. The results of these experiments are presented in Tables~\ref{tab:results_diamond} and~\ref{tab:results_diamond_extended}. For our algorithm, the running times for the smallest instances may be affected by the initialization of the Diblob package. The experiments show that, in general, our approach significantly outperforms all the others in terms of both time and memory usage.

\subsection{Output delay}
We analyzed inter-output delays across 25 CFG instances, each containing between 1,000 and 2,000 nodes. The algorithm was initially run on 40 instances; however, cases with a small number of prime paths were excluded from the analysis. For each instance, the algorithm generated a maximum number of $5\cdot 10^7$ prime paths, and the time (delay) between the generation of each consecutive $10^5$ paths was measured. This gives us a maximum of $5 \cdot 10^2$ measurements for each instance, for which all the reported statistics were computed, as shown in Table \ref{tab:delay_stats}. In most instances, the delay distribution is tightly concentrated, with coefficients of variation typically below 0.15.
Moreover, high percentiles (e.g., 95th percentile) remain close to the mean, indicating the absence of heavy-tailed behavior in most cases.
While a small number of instances exhibit higher variability due to occasional outliers, these do not dominate the overall behavior.
Importantly, regression analysis reveals no meaningful increase in delay over time, with slopes consistently close to zero across all instances.
These results suggest that although a worst-case polynomial delay cannot be guaranteed (unless P = NP), our algorithm exhibits stable, well-behaved delay in practice, consistent with average-case polynomial delay.

\section{Coverage Criteria}\label{sec:coverCrit}

Algorithm \ref{alg:ptime_path_generator} for generating prime paths can serve as a basis for creating algorithms to generate sets of test paths that meet multiple path-based coverage criteria.  The advantage of our solution is that individual test paths are returned immediately upon detection. This is particularly useful when paths need to be transformed into test cases, because this can be done independently of generating the test paths. In Appendix \ref{sec:app-b}, we present algorithms for designing test paths that satisfy four types of path-related test coverage criteria. All algorithms accumulate test coverage items along the shortest path between the first and last coverage items they yield. If that is not possible, the algorithm finds the shortest path to the final vertex and yields a test path based on a parameter ($k$) that controls the maximum number of coverage items accumulated per test path. The implementation of the mentioned algorithms is available in the Diblob package \cite{zelek2024}.

In this paper, we consider the following four path-based coverage criteria: 
\begin{enumerate}
    \item Prime Path Coverage (requires $TP$ to cover $PP(G)$), 
    \item Simple Cycle Coverage (requires $TP$ to cover $SC(G)$),
    \item Simple Path Coverage (requires $TP$ to cover $SP(G)$),
    \item e-Acyclic Path Coverage (requires $TP$ to cover $eAP(G)$).
\end{enumerate}


\section{Conclusions}

We presented a novel, space- and time-efficient algorithm for generating prime paths. Our empirical evaluation on real-world CFGs demonstrates that the proposed algorithm consistently outperforms state-of-the-art approaches in both execution time and memory usage. Notably, these improvements hold even when our method is applied to substantially larger graphs than those considered in prior works. 

Several promising directions for future research emerge from this work. First, our framework could be extended to support additional coverage criteria, such as data flow-based. Second, integrating the algorithm into industrial testing tools and continuous integration pipelines would enable further validation of its practical impact in real development workflows. Third, future work could explore adaptive strategies that dynamically adjust path-generation limits based on observed fault detection effectiveness or resource constraints. Finally, parallelization may further improve performance on very large or highly complex control-flow graphs.

From a theoretical perspective, it would be interesting to ask about the FPT polynomial delay of the problem of enumerating all prime paths (a natural candidate for a parameter would be either the maximal degree of a vertex in a digraph or the maximal size of an SCC of a digraph). The second natural question to ask is whether there exists a prime path enumeration algorithm with polynomial delay without the assumption that the order of the outputs is specific, as stated in the proof of Theorem \ref{thm:np_poly_delay}.

\bibliography{references}
\newpage
\appendix

\section{Proofs of lemmata and theorems}

Proof of \textbf{Lemma~\ref{lemma:split_paths}}
\begin{proof}
We construct sequences $(t_1, \ldots t_l)$ and $(j_1, \ldots j_l)$ as follows. Put $J=(j_1)=(1)$ and $T=(t_1)$ such that $SC_{t_1}$ induces the unique SCC containing $v_1$. If $n=1$, it is easy to check that the lemma holds.
If $n>1$, we repeatedly modify $J$ and $T$, considering vertices $v_2$ to $v_n$ of $p$ in sequence. Suppose that the next vertex to consider is $v_i$ with $J=(j_1,\ldots,j_\alpha)$ and $T=(t_1,\ldots,t_\alpha)$ constructed so far. If $v_i \in SC_{t_\alpha}$, do not change either $J$ or $T$ and proceed to $v_{i+1}$. If, on the other hand, $v_i \notin SC_{t_\alpha}$, modify $J$ to $(j_1,\ldots,j_\alpha,j_{\alpha+1})$ with $j_{\alpha+1}=i$ and modify $T$ to $(t_1,\ldots,t_\alpha,t_{\alpha+1})$ such that $SC_{t_{\alpha+1}}$ induces the unique SCC containing $v_i$. Note that $t_{\alpha+1} \neq t_\beta$ for $\beta \in \{1,\ldots,\alpha\}$, since otherwise there would exist a cycle in $G$ with vertices in more than one SCC.
After repeating the procedure above for the rest of the vertices of $p$, it is easy to check that $J$ and $T$ are as desired.
\end{proof}

Proof of \textbf{Theorem~\ref{thm:pp_characterization}}
\begin{proof} 
``$\Rightarrow$''. Assume that $p = (v_1, \ldots, v_n)$ is a prime path in $G$. If $p$ is a simple cycle, then condition \ref{pp_characterization:cond_1} holds, so by definition, the only other possibility is that $p$ is a simple path that cannot be extended. Two cases remain, depending on whether $p$ is contained in a single SCC of $G$.

Case 1. $\exists_{k \in \mathbb{N}} V(p) \subseteq SC_k$. Since $p$ cannot be extended, then $v_1 \notin out(v_n)$ and $v_n \notin in(v_1)$ (otherwise $p$ could have been extended to a simple cycle). Furthermore, $in(v_1) \subseteq V(p)$ and $out(v_n) \subseteq V(p)$, since otherwise $p$ also could have been extended. Therefore condition \ref{pp_characterization:cond_2} holds.

Case 2. $\forall_{k \in \mathbb{N}} V(p) \not\subseteq SC_k$. From Lemma \ref{lemma:split_paths} we get that there exists a simple path $(SC_{t_1}, \dots, SC_{t_l})$ in $Cond(G)$ with $Cut_{SC_{t_i}}(p) \neq \varnothing$ for all $1 \leq i \leq l$ and with $\bigcup_{i=1}^l Cut_{SC_{t_i}}(p) = V(p)$. Since $p$ is a simple path, $Cut_{SC_{t_i}}(p)$ also induces a simple path for every $i \in \{1, \ldots, l\}$. It remains to prove that $in(v_1) \subseteq Cut_{SC_{t_1}}(p)$ and $out(v_n) \subseteq Cut_{SC_{t_l}}(p)$.

Suppose that $in(v_1) \nsubseteq Cut_{SC_{t_1}}(p)$. Then one of the following must hold: \begin{enumerate}
    \item $in(v_1) \cap ( SC_{t_1} \setminus Cut_{SC_{t_1}}(p)) \neq \varnothing$ -- but that would make $p$ extendable
    \item $in(v_1) \cap SC_{t} \neq \varnothing$ and $t \notin \{t_1, \ldots, t_l\}$ -- but that would make $p$ extendable
    \item $in(v_1) \cap SC_{t} \neq \varnothing$ and $t \in \{ t_2, \ldots, t_l\}$ -- but that would contradict the fact that $Cond(G)$ is a directed acyclic graph.
\end{enumerate}

The argument for the fact that $out(v_n) \subseteq Cut_{SC_{t_l}}(p)$ is analogous.

``$\Leftarrow$''. If condition \ref{pp_characterization:cond_1} holds, $p$ is a prime path by definition. If condition \ref{pp_characterization:cond_2} holds, $p$ is a non-extendable simple path, so again it is a prime path by definition.

Assume that condition \ref{pp_characterization:cond_3} holds. Since for every $i \in \{1, \ldots, l\}$ we have that $Cut_{SC_{t_i}}(p)$ is a simple path and $SC_{t_1}, ..., SC_{t_l}$ form a partition of $V(G)$,  $p$ is a simple path. Additionally, since $in(v_1) \subseteq SC_{t_1}(p)$ and $out(v_n) \subseteq SC_{t_l}(p)$, $p$ is non-extendable, because all the vertices that could extend $p$ are already in $p$.
\end{proof}


Proof of \textbf{Theorem~\ref{thm:np_poly_delay}}
\begin{proof}

Assume that we are given an instance of a directed fixed vertex Hamiltonian path problem \cite{DBLP:books/fm/GareyJ79} (i.e. for a given $(G,s)$ we are looking for a Hamiltonian path starting at $s$; the proof its NP-completeness is a straightforward reduction by adding a vertex $s'$ and an edge $(s',s)$). Suppose that we have a polynomial delay algorithm $\mathcal{E}$ that enumerates all non-extendable simple paths from a given vertex. We will show that we can, using $\mathcal{E}$ as an oracle, decide if $(G,s)$ has a Hamiltonian path. For a given instance of a directed fixed vertex Hamiltonian path problem $(G,s)$, construct a graph $G' = (V', E')$ such that $V' = V(G) \cup \{t\}$ $E' = E(G) \cup \{(u,t) : u \in V(G) \} \cup \{(u,s) : u \in V(G) \setminus \{s\}\}$. We prove the following claims (for a path $p=(v_1, ..., v_k)$ and vertex $v$ by $p.v$ we mean a path $(v_1, ..., v_k, v)$).

\begin{claim}
\label{claim:np_poly_1}
If $p$ is a Hamiltonian path from $s$ in $G$, then $p.t$ is non-extendable in $G'$.
\end{claim}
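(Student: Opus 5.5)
We need to show: if $p=(s=u_1,\dots,u_k)$ is a Hamiltonian path of $G$ starting at $s$ (so $k=|V(G)|$ and $V(p)=V(G)$), then $p.t=(u_1,\dots,u_k,t)$ is a non-extendable simple path in $G'$. The plan is to check, in turn, that $p.t$ is a simple path in $G'$, that it is not forward extendable, and that it is not backward extendable.

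\emph{Simple path.} All edges of $p$ lie in $E(G)\subseteq E'$. The last edge $(u_k,t)$ is in $E'$ because $E'$ contains every edge $(u,t)$ with $u\in V(G)$. The vertices $u_1,\dots,u_k$ are distinct, and $t\notin V(G)$, so $p.t$ is simple.

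\emph{Not forward extendable.} By the definition of $E'$, every edge of $G'$ either lies in $E(G)$ or has head $t$ or $s$. In particular no edge of $G'$ leaves $t$, so $out_{G'}(t)=\varnothing$. Hence there is no $v$ with $(u_1,\dots,u_k,t,v)\in SP(G')\cup SC(G')$.

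\emph{Not backward extendable.} Here we compute $in_{G'}(s)$. The edges of $E(G)$ contribute only vertices of $V(G)$. The added edges $(u,s)$ also have $u\in V(G)\setminus\{s\}$. The vertex $t$ contributes nothing, since $t$ has no outgoing edges. So $in_{G'}(s)\subseteq V(G)=V(p)\subseteq V(p.t)$. Any candidate $v$ with $(v,s)\in E'$ is therefore a vertex $u_i$ of $p.t$ with $i\ge 2$ (since $s\notin in(s)$ unless $G$ has a loop at $s$; a loop would be excluded by the same simplicity argument). Then $(u_i,u_1,\dots,u_k,t)$ repeats $u_i$ and is not a simple path. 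It is also not a simple cycle, since a simple cycle must begin and end at the same vertex, whereas this sequence ends at $t\neq u_i$.

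The only delicate point is this backward case. One must use Hamiltonicity, i.e.\ $V(p)=V(G)$, together with the fact that $t$ is not a predecessor of $s$, to rule out the simple-cycle option in the definition of extendability. Everything else is immediate from the construction of $E'$.
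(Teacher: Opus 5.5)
Your proof is correct and uses the same two facts as the paper's proof. Hamiltonicity gives $V(p.t)=V'$, so no extension can be a simple path. And $out_{G'}(t)=\varnothing$ means $(t,s)\notin E'$, which rules out closing into a simple cycle in either direction. The paper compresses this into one observation; you split it into forward and backward cases and also check that $p.t$ is a simple path in $G'$, which the paper leaves implicit.
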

\begin{proof}
Observe that $V(p.t) = V(G) \cup \{t\} = V'$, so the only way for $p.t$ to be extendable would be to a simple cycle. Since $out(t) = \varnothing$, this is impossible.
\end{proof}

\begin{claim}
\label{claim:np_poly_2}
If $p$ is a non-extendable simple path in $G'$ and $p$ starts with $s$, then $p$ is a path from $s$ to $t$.
\end{claim}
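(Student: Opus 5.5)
We argue by contradiction on the endpoint of $p$. Let $p=(s=v_1,\ldots,v_k)$ be a non-extendable simple path in $G'$ starting at $s$, and suppose $v_k\neq t$. The aim is to produce a one-vertex extension of $p$ inside $SP(G')\cup SC(G')$, which contradicts non-extendability.

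\textbf{Case 1: $t\notin V(p)$.} By construction of $E'$, every $u\in V(G)$ has the edge $(u,t)\in E'$. In particular $(v_k,t)\in E'$, so $p.t=(v_1,\ldots,v_k,t)$ is a path. Its vertices are distinct because $t\notin V(p)$. Hence $p.t\in SP(G')$, so $p$ is forward extendable, a contradiction.

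\textbf{Case 2: $t\in V(p)$ but $t\neq v_k$.} Then $t=v_i$ for some $i<k$, so $t$ has a successor $v_{i+1}$ on $p$. This is impossible, since $out(t)=\varnothing$ in $G'$: no edges leave $t$ by construction.

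Since both cases are contradictory, $v_k=t$, and $p$ is a path from $s$ to $t$.

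There is one degenerate case to check. If $p=(s)$ is a single vertex, then $s\in V(G)$ and $(s,t)\in E'$, so $(s,t)$ is a simple path extending $p$. This is just Case 1 with $k=1$, so it is already covered. Note also that $s\neq t$, because $t$ is a new vertex.

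No real obstacle is expected. The only care needed is to use the edges $(u,t)$ for all $u\in V(G)$, including $u=s$ and the vertices $v_k$ that lie in $V(G)$. The extra edges $(u,s)$ play no role here. They matter for the converse direction of the reduction, where they make $s$-starting paths backward extendable unless every vertex is already used.
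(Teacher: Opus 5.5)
Your proof is correct and is essentially the paper's argument. If the last vertex $v_k$ is not $t$, then $v_k \in V(G)$, so $(v_k,t)\in E'$, and $t\notin V(p)$ because $out(t)=\varnothing$; hence $p.t$ is a forward extension, a contradiction. Your explicit case split and your check of the trivial path $p=(s)$ only spell out what the paper states in one sentence.
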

\begin{proof}
Let $p = (s, v_1, \ldots, v_k)$ be a non-extendable simple path in $G'$. Suppose $v_k \neq t$. Then $(v_k,t) \in E'$ and $t \notin V(p)$, since $t$ can only be the last vertex of a path because $out(t) = \varnothing$. Thus $p$ is forward extendable, a contradiction.
\end{proof}

\begin{claim}
\label{claim:np_poly_3}
If $p$ is a non-extendable simple path from $s$ to $t$ in $G'$, then $p$ restricted to $V(G)$ is a Hamiltonian path in $G$ starting at $s$.
\end{claim}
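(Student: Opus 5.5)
We show that $p$, being a non-extendable simple path in $G'$ from $s$ to $t$, satisfies $V(p)=V'$; deleting its last vertex $t$ then gives a Hamiltonian path of $G$ starting at $s$. Write $p=(s,v_1,\ldots,v_k,t)$. Because $out(t)=\varnothing$, $t$ appears only as the last vertex. Every edge of $E'$ not in $E(G)$ either ends at $t$ or ends at $s$. Since $s$ is the first vertex of the simple path $p$, no edge of $p$ enters $s$. Hence every edge of $p$ among $(s,v_1),\ldots,(v_{k-1},v_k)$ lies in $E(G)$. So $p$ restricted to $V(G)$, namely $q=(s,v_1,\ldots,v_k)$, is a simple path of $G$ starting at $s$.

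It remains to show that $q$ covers $V(G)$. We argue by backward non-extendability. Suppose some $u\in V(G)$ does not lie on $p$. Then $u\neq s$ and $u\neq t$. By construction $(u,s)\in E'$, because every vertex of $V(G)\setminus\{s\}$ has an edge to $s$. Since $u\notin V(p)$, the path $(u,s,v_1,\ldots,v_k,t)$ is a simple path in $G'$. Thus $p$ is backward extendable, contradicting non-extendability. Therefore $V(p)\supseteq V(G)$, and $q$ visits every vertex of $G$ exactly once, so $q$ is a Hamiltonian path in $G$ starting at $s$.

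There is one degenerate case to check: $V(G)=\{s\}$. Then $p=(s,t)$ and $q=(s)$, which is trivially Hamiltonian. Nothing in the argument above is delicate. The only point needing care is that the added edges $(u,s)$ cannot appear inside $p$, because $s$ is its first vertex; this is exactly what lets us conclude that $q$ uses only edges of $G$.

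With Claims~\ref{claim:np_poly_1}--\ref{claim:np_poly_3}, the reduction finishes as follows. Run $\mathcal{E}$ from $s$ on $G'$ until its first output, which takes polynomial time. By Claim~\ref{claim:np_poly_2} that output ends at $t$, and by Claim~\ref{claim:np_poly_3} it covers $V(G)$. Hence $G$ has a Hamiltonian path from $s$ if and only if $\mathcal{E}$ outputs at least one path. We also need the first output to exist: a longest simple path from $s$ ending at $t$ cannot be forward extended, since $out(t)=\varnothing$. If it could be backward extended by a vertex not on it, we could prepend that vertex to obtain a longer path, which would start at that vertex rather than at $s$. So the argument must be checked carefully: enumeration yields nothing exactly when no non-extendable path from $s$ exists, and by Claims~\ref{claim:np_poly_1} and~\ref{claim:np_poly_3} that happens exactly when there is no Hamiltonian path. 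This one-output test decides the NP-complete problem in polynomial time.
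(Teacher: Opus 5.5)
Your proof is correct and follows the paper's argument: backward non-extendability together with the added edges $(u,s)$ for all $u\in V(G)\setminus\{s\}$ forces $V(G)\subseteq V(p)$. You also justify the step the paper calls obvious, namely that no added edge can occur inside the restricted path because $s$ is first and $t$ is last. Your final paragraph goes beyond the claim, and its detour about needing the first output to exist is unnecessary because the decision procedure handles the no-output case directly; it does not affect the proof of this claim.
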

\begin{proof}
Obviously, $p$ restricted to $V(G)$ is a path in $G$. Moreover, $p$ is non-extendable; in particular, it is not backward extendable. Since $s$ starts $p$ and $(u,s) \in E'$ for every $u \in V(G) \setminus \{s\}$ then $V(G) \setminus\{s\} \subseteq V(p)$, so $p$ restricted to $V(G)$ is Hamiltonian.
\end{proof}
 To complete the proof, we provide a decision algorithm for the directed fixed Hamiltonian path problem. It proceeds as follows: 
 \begin{itemize}
     \item Construct $(G',s)$ from the directed fixed Hamiltonian path problem instance $(G,s)$ like described above.
     \item Run the algorithm $\mathcal{E}$ on $(G',s)$. By polynomial delay, the first output appears after $d(|V'|,|E'|)$ steps.
     \item If at least one output appears after $d(|V'|,|E'|)$ steps, then by Claim \ref{claim:np_poly_2} it is a simple path of the form $p.t$. By Claim \ref{claim:np_poly_3} we deduce that $p$ is Hamiltonian, and return $\textbf{true}$.
     \item If no output appears after $d(|V'|, |E'|)$ steps, then $G'$ has no non-extendable paths from $s$, so return \textbf{false} (if $G$ had a Hamiltonian path $p$ from $s$, then by Claim \ref{claim:np_poly_1}, $p.t$ would be a non-extendable path in $G'$ from $s$, contradicting the assumption that no output appeared).
 \end{itemize}

\end{proof}

Proof of \textbf{Lemma~\ref{lemma:prime_path_necessary}}

\begin{proof}
Conditions \ref{lemma:prime_path_necessary:cond1} and \ref{lemma:prime_path_necessary:cond2} easily follow from Theorem \ref{thm:pp_characterization}, conditions \ref{pp_characterization:cond_2} and \ref{pp_characterization:cond_3}.

Note that again from Theorem \ref{thm:pp_characterization}, all vertices from $in(v_1)$ must be on $p$ (and also in $SC(v_1)$). Further notice that for any subset $P \subseteq V(p)\setminus\{v_1,v_n\}$ we have $|\bigcup_{v \in P} out(v)| \geq |P|$. Indeed, denote by $(i_1,\ldots,i_\alpha)$ an increasing sequence of indices such that $P=\{v_{i_1},\ldots,v_{i_\alpha}\}$. Then $\{v_{i_1+1},\ldots,v_{i_\alpha+1}\}\subseteq\bigcup_{v \in P} out(v)$. Also, if $v_1 \in \bigcup_{v \in P} out(v)$, then the inequality is strict and if $\{v_{i_1},\ldots,v_{i_\alpha}\}$ are in a single SCC of $G$, then at most one of $\{v_{i_1+1},\ldots,v_{i_\alpha+1}\}$ is not in that same SCC. Similarly, we get that $|\bigcup_{v \in P} in(v)| \geq |P|$ with the same stipulations. Conditions \ref{lemma:prime_path_necessary:cond3} to \ref{lemma:prime_path_necessary:cond7} and \ref{lemma:prime_path_necessary:cond4} to \ref{lemma:prime_path_necessary:cond8} follow mostly from a combination of these properties.

We start with \ref{lemma:prime_path_necessary:cond3}. If $in(v_1) = \varnothing$, then we are done as the left-hand side is positive. If $in(v_1) \neq \varnothing$ then $v_1 \in \bigcup_{v \in in(v_1)} out(v)$ and we are also done. Condition \ref{lemma:prime_path_necessary:cond4} can be proved by a similar argumentation.

To prove \ref{lemma:prime_path_necessary:cond5} note that if $in(v_1) = \varnothing$, then the inequality is trivial. If $in(v_1) \neq \varnothing$, then again $v_1 \in \bigcup_{v \in in(v_1)} out(v)$ and we ``lose'' at most one vertex from the SCC as shown above, so the weak inequality holds. Condition \ref{lemma:prime_path_necessary:cond6} follows from a similar argument.

Using very similar arguments we get \ref{lemma:prime_path_necessary:cond7} and \ref{lemma:prime_path_necessary:cond8}. 
\end{proof}

Proof of \textbf{Lemma~\ref{lemma:ex_g_cycles_to_g_paths}}

\begin{proof}``$\Rightarrow$''. Assume $c=(x_{v_1},v_1,\dots,v_n,x_{v_1})$ is a simple cycle in $Ex(G,v_1)$. Note that the edges $(x_{v_1},v_1)$ and $(v_n, x_{v_1})$ belong to $E'$. For each $i=1,\dots,n-1$, the edge $(v_i,v_{i+1}) \in E$ is in $c$. Hence, $(v_1,\dots,v_n)$ is a path in $G$. Since $c$ is simple, all its vertices are pairwise distinct. In particular, $v_1, ..., v_n$ are pairwise distinct as well as distinct from $x_{v_1}$. Thus $(v_1,\dots,v_n)$ is a simple path in $G$. Since $(v_n, x_{v_1}) \in E'$, from the definition of $Ex(G, v_1)$ it follows that $(v_n, v_1) \notin E$.

``$\Leftarrow$''. Assume $(v_1,\dots,v_n)$ is a simple path in $G$, $v_1 \in V_{start}$ and $v_n \in V_{end}$. By the definition of $Ex(G,v_1)$, $(x_{v_1},v_1)\in E'$ and $(v_n,x_{v_1})\in E'$. Since $(v_1,\dots,v_n)$ is a simple path in $G$, every edge $(v_i,v_{i+1})$ lies in $E \subseteq E'$. These edges form a closed path $c = (x_{v_1},v_1,\dots,v_n,x_{v_1})$ in $Ex(G,v_1)$. As $v_1, \ldots, v_n$ are pairwise distinct, no vertex besides $x_{v_1}$ appears in $c$ more than once. Thus, the path is a simple cycle.
\end{proof}

Proof of \textbf{Lemma~\ref{lemma:johnson_not_modified}}
\begin{proof}
    It is easy to see that if the last two parameters of \textsc{Circuit} are $\epsilon$, then \textsc{Circuit} never reaches line \ref{alg:jonsons_circuit:modified}, so it acts like a Johnson's algorithm and the result holds.
\end{proof}

Proof of \textbf{Theorem~\ref{thm:algorithm_correctness}}
\begin{proof}
Recall that a prime path is either a simple cycle or a non-extendable simple path. From Proposition \ref{prop:simple_cycles} we know that if $c_0$ is a simple cycle, then there exists a unique $i \in \{1, \ldots, m\}$ such that $V(c_0) \subseteq SCC_i$. Since we compute all simple cycles for all SCCs of $G$ in line 1 of Algorithm \ref{alg:ptime_path_generator} (see Lemma \ref{lemma:johnson_not_modified}), all simple cycles of $G$ will be generated by the algorithm. 

Assume now that $v \in V_{start}$ and $c_0$ is a simple cycle in $Ex(G,v)$ that contains $x_v$. From Lemma \ref{lemma:ex_g_cycles_to_g_paths} we know that $c \setminus \{ x_v \}$ is a simple path that starts with a vertex from $V_{start}$ and ends with a vertex from $V_{end}$. By Lemma \ref{lemma:prime_path_necessary} every possible start of a prime path must be in $V_{start}$ and every possible end of a prime path must be in $V_{end}$, so from the construction of $Ex(G,v)$ we know that all possible endpoints for a given $v \in V_{start}$ are checked. So, if $c_0 \setminus \{x_v\}$ is contained in a single SCC, then we only need to check condition \ref{pp_characterization:cond_2} from Theorem \ref{thm:pp_characterization} to make sure that $c_0 \setminus \{x_v\}$ is a prime path. Otherwise, we need to check condition \ref{pp_characterization:cond_3}. 

To prove the complexity bound, recall the property of Johnson's algorithm (see Lemma 3 of \cite{johnson1975}) that at most $O(|V| + |E|)$ time elapses between two consecutive calls of $\textsc{Circuit}$ (that is without checking the condition in line \ref{alg:jonsons_circuit:modified}, thus proceeding as a classical Johnson's algorithm). However, adding the condition in line \ref{alg:jonsons_circuit:modified} does not change this upper bound. Indeed, we check whether a vertex belongs to a given SCC (this can be done in constant time assuming a proper implementation) and whether a path is backward extendable (this can be done in $O(|V| + |E|)$). Observe that the check of backward extendability is done only once per path (when, during construction, we first change the SCC of $G$ we are in), so the complexity bound for $\textsc{Circuit}$ holds.

Observe that Algorithm \ref{alg:ptime_path_generator} calls Algorithm \ref{alg:jonsons_algorithm_core} in line \ref{alg:ptime_path_generator:simple_cycles}, which executes in $O((|V| + |E|)c)$ as it is Johnson's algorithm. Line \ref{alg:ptime_path_generator:non_extendable_paths} executes Algorithm \ref{alg:jonsons_algorithm_non_extendable}. It is easy to check that computing $V_{start}$ and $V_{end}$ can be done in $O((|V|+|E|)s)$. Let $V_{start}=\{v_1,\ldots,v_k\}$ and $s(v_i)$ ($i \in \{1,\ldots,k\}$) be a number of simple paths from $v_i$ to vertices of $V_{end}$ that are not in $in(v_i)$. In the main loop of Algorithm \ref{alg:jonsons_algorithm_non_extendable} we start with computing the graph $Ex(G,v)$ for a vertex $v$ in $V_{start}$ which can be done in $O(|V| + |E|)$ as we are only adding one vertex and at most $|V|$ edges to $G$ as well as searching for vertices reachable from $v$. Then $\textsc{Circuit}$ is called on $Ex(G,v)$, which executes in $O((|V| + |E|)c')$, where $c'$ is the number of simple cycles of $Ex(G,v)$ containing $x_v$ (or 1 if the number of such cycles is 0). A number of simple cycles in $Ex(G,v)$ containing $x_v$ is however just $s(v)$ (see Lemma \ref{lemma:ex_g_cycles_to_g_paths}) so the bound is $O((|V| + |E|)s(v))$. Lines 6-9 are executed in $O(|V| \cdot s(v))$ (for every path yielded by \textsc{Circuit} we check extendability in $O(|V|)$), which is bounded by $O((|V| + |E|)s(v))$. Therefore execution of one iteration of the main loop for a given $v\in V_{start}$ takes at most $O((|V| + |E|)s(v))$, so execution of Algorithm \ref{alg:jonsons_algorithm_non_extendable} takes $O((|V| + |E|)(s(v_1)+\ldots+s(v_k)))$ which is bounded by $O((|V| + |E|)s)$.

Combining this with $O((|V| + |E|)c)$ ends the proof.
\end{proof}

Proof of \textbf{Corollary~\ref{cor:poly_delay_dags}}
\begin{proof}
Since $G$ is acyclic, it has no simple cycles, so 
Algorithm~\ref{alg:jonsons_algorithm_core} outputs nothing and 
terminates in time polynomial in $|V|$ and $|E|$. It suffices to analyze 
Algorithm~\ref{alg:jonsons_algorithm_non_extendable}. Since $G$ is acyclic, every SCC of $G$ is a singleton. We start with a trivial characterization of prime paths in DAGs.

\begin{claim}
\label{claim:dag_prime_paths_1}
Let $G$ be a DAG and $p = (v_1, \ldots, v_n)$ a simple 
path in $G$. Then $p$ is a prime path if and only if 
$\mathrm{in}(v_1) = \varnothing$ and 
$\mathrm{out}(v_n) = \varnothing$.
\end{claim}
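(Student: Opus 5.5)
The plan is to prove both directions of Claim~\ref{claim:dag_prime_paths_1} directly from the definitions. A DAG has no simple cycles of length at least one; under the paper's definition of a simple cycle $(v_1,\dots,v_n,v_1)$, even $n=1$ would require a self-loop, which is itself a cycle. Hence a prime path in a DAG is exactly a non-extendable simple path. This case is also consistent with condition~\ref{pp_characterization:cond_3} of Theorem~\ref{thm:pp_characterization}, because every SCC is a singleton; still, the direct argument is cleaner, so I would use it rather than invoke the theorem.

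For the ``if'' direction, assume $\mathrm{in}(v_1)=\varnothing$ and $\mathrm{out}(v_n)=\varnothing$. Any backward extension $(v,v_1,\dots,v_n)$, whether it is a simple path or a simple cycle, needs an edge $(v,v_1)\in E$, which would put $v\in\mathrm{in}(v_1)$. Any forward extension needs an edge $(v_n,v)$, which would put $v\in\mathrm{out}(v_n)$. Neither edge exists, so $p$ is non-extendable. Since $p$ is simple by hypothesis, it is a prime path.

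For the ``only if'' direction, suppose $p$ is prime and, say, there is some $u\in\mathrm{in}(v_1)$. If $u\notin V(p)$, then $(u,v_1,\dots,v_n)$ is a simple path, so $p$ is backward extendable, a contradiction. If $u\in V(p)$, say $u=v_i$, then $(v_1,\dots,v_i,v_1)$ is a closed walk in $G$, contradicting acyclicity; this includes the case $i=1$, which would be a self-loop. Hence $\mathrm{in}(v_1)=\varnothing$. The argument that $\mathrm{out}(v_n)=\varnothing$ is symmetric: an out-neighbour $w$ of $v_n$ lying outside $V(p)$ gives a forward extension, and one inside $V(p)$ closes a cycle.

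The main subtlety is the case $u\in V(p)$. It must be excluded through acyclicity and not through simplicity, and the degenerate case $n=1$ needs the same treatment. With the claim established, the corollary's remaining work is to observe that the pruning in Algorithm~\ref{alg:jonsons_algorithm_non_extendable} only needs to explore sources. This holds because condition~\ref{lemma:prime_path_necessary:cond1} with singleton SCCs forces $\mathrm{in}(v)=\varnothing$, and likewise condition~\ref{lemma:prime_path_necessary:cond2} forces sinks. Consequently, every branch of the search reaches a sink and yields a path, which gives the polynomial delay.
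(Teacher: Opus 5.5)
Your argument is correct and takes the same route as the paper, which justifies the claim only by saying it follows directly from the definitions of prime path, $in(v)$ and $out(v)$. Your two directions supply the details that one-line justification leaves implicit: with no in-neighbour of $v_1$ and no out-neighbour of $v_n$, no extension is possible. Conversely, an in-neighbour of $v_1$ off $p$ gives a backward extension, while one on $p$ closes a cycle and contradicts acyclicity, and symmetrically for $v_n$.
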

%

\begin{claim}
\label{claim:dag_prime_paths_2}
Let $G$ be a DAG. Then $v \in V_{start}(G)$ 
if and only if $in(v) = \varnothing$, and 
$v \in V_{end}(G)$ if and only if 
$out(v) = \varnothing$.
\end{claim}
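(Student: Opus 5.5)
The claim has two parts, and I would verify each by unpacking the definitions of $V_{start}$ and $V_{end}$ in the acyclic setting. The only structural fact needed is that in a DAG every SCC is a singleton, so $SC(v)=\{v\}$ for every $v$. For the start side, condition \ref{lemma:prime_path_necessary:cond1} asks that $in(v)\subseteq SC(v)=\{v\}$. A DAG has no self-loops, so $v\notin in(v)$, and the condition therefore forces $in(v)=\varnothing$. This gives the direction ``$v\in V_{start}\Rightarrow in(v)=\varnothing$''.

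For the converse, assume $in(v)=\varnothing$ and check the four start conditions with the convention that a union over an empty index set is empty.
\begin{itemize}
\item \ref{lemma:prime_path_necessary:cond1} holds trivially, since $in(v)=\varnothing\subseteq SC(v)$.
\item \ref{lemma:prime_path_necessary:cond3} reduces to $|\{v\}|=1>0=|in(v)|$.
\item \ref{lemma:prime_path_necessary:cond5} and \ref{lemma:prime_path_necessary:cond7} reduce to $|\varnothing\cap SC(v)|=0\ge 0$.
\end{itemize}
Hence $v\in V_{start}$.

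The end side is symmetric. Condition \ref{lemma:prime_path_necessary:cond2} with $SC(v)=\{v\}$ and no self-loops forces $out(v)=\varnothing$. Conversely, if $out(v)=\varnothing$, conditions \ref{lemma:prime_path_necessary:cond4}, \ref{lemma:prime_path_necessary:cond6} and \ref{lemma:prime_path_necessary:cond8} collapse to $1>0$ and $0\ge 0$ exactly as above. One could alternatively note that reversing all edges swaps the start and end conditions, but a direct check is just as short.

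I see no real obstacle here. The only points needing care are the empty-union convention and the fact that an acyclic graph contains no loop $(v,v)$; the paper's definition of a path with $n\ge 1$ admits cycles $(v_1,v_1)$, so a self-loop would count as a simple cycle and cannot occur in a DAG. I would state this explicitly to justify $v\notin in(v)$ and $v\notin out(v)$.
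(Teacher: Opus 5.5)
Your proof is correct and takes the same route as the paper, which only asserts that the claim follows directly from the definitions of $V_{start}$, $V_{end}$, $in$ and $out$. You fill in what the paper leaves implicit: SCCs in a DAG are singletons, and there are no self-loops, so \ref{lemma:prime_path_necessary:cond1} and \ref{lemma:prime_path_necessary:cond2} force empty in- and out-neighbourhoods; the remaining conditions then hold vacuously, under the same reading of \ref{lemma:prime_path_necessary:cond3} as $|(\bigcup_{u\in in(v)} out(u))\cup\{v\}|$ that the paper uses in the proof of Lemma~\ref{lemma:prime_path_necessary}.
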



These claims follow directly from definitions of prime path, $V_{start}$, $in(v)$, and $out(v)$. By Claims~\ref{claim:dag_prime_paths_1} 
and~\ref{claim:dag_prime_paths_2}, every 
$v \in V_{start}$ satisfies 
$in(v) = \varnothing$. Since $G$ is acyclic, every 
$v \in V_{start}$ is a start point of at least 
one path to some $t \in V_{end}$, hence 
producing at least one prime path. 

Since every SCC is a singleton, the backward extendability check at $V_{start}$ costs $O(1)$: $in(v) = \varnothing$, so there are no incoming edges to check. 
Similarly, the forward extendability check at 
$V_{end}$ costs $O(1)$ since 
$\mathrm{out}(t) = \varnothing$.  
Algorithm~\ref{alg:jonsons_circuit} runs as standard 
Johnson on $G_{ext}$, which is a single SCC 
by Lemma~\ref{lemma:ex_g_cycles_to_g_paths}. By 
Lemma~3 of~\cite{johnson1975} the delay is 
$O(|V|+|E|)$. \end{proof}

\section{Algorithms for test paths generation achieving different path-based coverage criteria} \label{sec:app-b}

\subsection{Simple Cycle Coverage} \label{sub:SCC}
Simple Cycle Coverage requires test paths to cover all simple cycles in a graph. For some graphs, the number of simple cycles can be relatively large, e.g., when a graph is a complete digraph. This is one reason returning results one by one makes the test paths design process more effective. It is worth noting that a simple cycle can begin in \textit{any} node. This means that, for example, in the case of an oriented triangle $v_1 \rightarrow v_2 \rightarrow v_3 \rightarrow v_1$, there are three simple cycles to cover: $(v_1, v_2, v_3, v_1)$, $(v_2, v_3, v_1, v_2)$, and $(v_3, v_1, v_2, v_3)$. Our algorithms, by default, cover only one of them. Therefore, to cover all simple cycles, it would be sufficient to duplicate the first simple cycle found to cover all its rotations. We implement this option with the $DoubleCycle$ flag.

In Algorithm \ref{alg:SCCcoverage}, the function $ShortestPath(x, y)$ returns the shortest path between vertices $x$ and $y$ (if it exists). It returns an empty list if $x=y$, and the $None$ value if there is no such path. As mentioned, the parameter $k$ allows us to accumulate a concrete number of simple cycles before returning a test path. In practice, the higher the $k$ value is, the fewer the number of test paths produced. However, the test paths will be longer, and we will also have to wait longer for yielding the next ones. 

\begin{algorithm}[!ht] 
\caption{\textsc{SimpleCycleCoverage}} \label{alg:SCCcoverage}
\begin{algorithmic}[1] 
\Require SESE Graph $G = (V, E, s, t)$, $DoubleCycle$, $k$
\Ensure test paths that achieve Simple Cycle Coverage (one by one) 
\State $(CyclesCounter, TestPath) \leftarrow (0, [\ ])$ 
\State $P \leftarrow  \textsc{SimpleCycles}(G)$ 
\For{$c \in P$}
    \If{$DoubleCycle$}
        \State $c \leftarrow c + tail(c)$ \Comment tail($c$) is $c$ without its first element
    \EndIf
    \State $CyclesCounter \leftarrow CyclesCounter + 1$
    \If{$CyclesCounter = 1$}
        \State $TestPath \leftarrow ShortestPath(s, head(c)) + c$ 
        \If{$k=1$}
            \State $TestPath \leftarrow ShortestPath(last(c), t)$ \Comment last($c$) is the last element of $c$
            \State \textbf{yield} $TestPath$
            \State ($CyclesCounter, TestPath) \leftarrow (0, [\ ])$
        \EndIf
    \Else
        \State $SPath = ShortestPath(last(TestPath), head(c))$
        \If{$SPath$ is $None$}
            \State $TestPath \leftarrow TestPath + ShortestPath(last(TestPath), t)$
            \State \textbf{yield} $TestPath$
            \State $(CyclesCounter, TestPath) \leftarrow (1, ShortestPath(S,head(c)) + c)$
        \Else
            \If{$CyclesCounter = k$}
                 \State $TestPath \leftarrow TestPath + SPath + c$
                 \State $TestPath \leftarrow TestPath + ShortestPath(last(c), t)$
                 \State \textbf{yield} $TestPath$
                 \State $(CyclesCounter, TestPath) \leftarrow (0, [\ ])$
            \Else
                    \State $TestPath \leftarrow TestPath + ShortestPath(last(TestPath), head(c))$
                    \State $TestPath \leftarrow TestPath + c$
            \EndIf
        
        \EndIf
    \EndIf
\EndFor
\If{$CyclesCounter \neq 0$}
    \State $TestPath \leftarrow TestPath + ShortestPath(last(TestPath), t)$
    \State \textbf{yield} $TestPath$
\EndIf
\end{algorithmic}
\end{algorithm}

\subsection{Prime Path Coverage}
Prime Path Coverage requires test paths to cover all prime paths in a graph. The Algorithm \ref{alg:PrPathcoverage} we propose separates prime paths into simple cycles and non-extendable simple paths. It simply combines algorithms from Sections \ref{sub:SCC} and \ref{sub:SPC}. In the \textsc{SimpleCycleCoverage} algorithm, we need to set the $DoubleCycle$ flag to $True$. The parameter $k$ has the same regulating function as in the previous cases.

\begin{algorithm}[!ht]
\caption{\textsc{PrimePathCoverage}}
\label{alg:PrPathcoverage}
\begin{algorithmic}[1]
\Require SESE $G = (V, E), k$
\Ensure test paths that achieve Prime Path Coverage (one by one) 
\State $SCC \leftarrow \textsc{SimpleCycleCoverage}(G,True,k)$ 
\State $SPC \leftarrow \textsc{SimplePathCoverage}(G, k)$ 

\For{$p \in SCC$}
    \State \textbf{yield} $p$
\EndFor
\For{$p \in SPC$}
    \State \textbf{yield} $p$
\EndFor
\end{algorithmic}
\end{algorithm}

\subsection{Simple Path Coverage} \label{sub:SPC}

Simple Path Coverage requires test paths to cover all simple paths in a graph. The algorithm \textsc{SimplePathCoverage}$(G, k)$ we propose is similar to Algorithm \ref{alg:SCCcoverage}; therefore, we do not provide its implementation. To design the test paths for this criterion, we use precisely the same strategy as in Section \ref{sub:SCC}, but there is no longer a need to use the $DoubleCycles$ flag. The parameter $k$ works similarly to the case of simple cycles, accumulating non-extendable simple paths.  

\subsection{e-Acyclic Path Coverage}

The e-Acyclic Path Coverage requires test paths to cover all edge-acyclic paths in a graph. This means that cycles can appear in these paths provided that only vertices are repeated. The Algorithm \ref{alg:eAccoverage} we propose is based on the fact that achieving the e-Acyclic Path Coverage for an SESE graph $G=(V, E, s, t)$ is equivalent to achieving the Simple Path Coverage for $L(G)$.

\begin{theorem}\label{thm:eacyclic_cov}
Let $G=(V, E, s, t)$ be an SESE graph. Then $eAP(G) = pr(SP(L(G)))$.
\end{theorem}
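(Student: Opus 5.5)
The plan is to show that $pr$ restricts to a bijection between the paths of $L(G)$ and the paths of $G$ of length at least one. Under this bijection, the vertex-distinctness of a path in $L(G)$ is exactly the edge-distinctness of its image in $G$. The equality $eAP(G) = pr(SP(L(G)))$ then follows by proving the two inclusions separately.

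First I would check that $pr$ is well defined and injective. A path $((u_1,w_1),\dots,(u_k,w_k))$ in $L(G)$ satisfies $w_i = u_{i+1}$ for every $i$, because $((u_i,w_i),(u_{i+1},w_{i+1})) \in L(E)$ requires the head of one edge to be the tail of the next. Hence the sequence has the form $((v_1,v_2),(v_2,v_3),\dots,(v_k,v_{k+1}))$, and its image $(v_1,\dots,v_{k+1})$ is a path in $G$ because every $(v_i,v_{i+1}) \in E$. The original sequence can be read back from consecutive pairs of the image, so $pr$ is injective. Conversely, any path $(v_1,\dots,v_n)$ in $G$ with $n \ge 2$ yields $q = ((v_1,v_2),\dots,(v_{n-1},v_n))$. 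Each $(v_i,v_{i+1}) \in E = L(V)$, and consecutive edges share the vertex $v_{i+1}$, so $q \in P(L(G))$ and $pr(q) = (v_1,\dots,v_n)$.

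For the inclusion $pr(SP(L(G))) \subseteq eAP(G)$, take a simple path $q$ in $L(G)$. Its vertices are the edges $(v_i,v_{i+1})$ of $pr(q)$, and these are pairwise distinct by simplicity of $q$, which is precisely the definition of $pr(q)$ being e-acyclic. For the reverse inclusion, take $p \in eAP(G)$ with at least one edge and form $q$ as above. Then $q$ is a path in $L(G)$ whose vertices are the edges of $p$; these are pairwise distinct, so $q \in SP(L(G))$ and $p = pr(q)$.

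The main obstacle is the degenerate case of a single-vertex path $(v)$. By the definition ($n=1$, no pairs of edges to compare) such a path is e-acyclic, but $pr$ only produces paths with at least one edge. I would handle this by making explicit a convention, implicit in the statement, that the identity concerns paths with at least one edge; equivalently, one may let the empty sequence of $L(G)$ correspond to trivial paths. I would then note that for a SESE graph this costs nothing for coverage. By condition~\ref{sese2}, every vertex $v$ lies on an $s(G)$--$t(G)$ path, which has at least one edge since $s(G) \neq t(G)$. So $v$ is an endpoint of some edge, and any test path covering that edge also covers $(v)$. Apart from this convention, the argument is a routine unfolding of the definitions of $L(G)$ and $pr$, and the SESE hypothesis is used only in this degenerate case.
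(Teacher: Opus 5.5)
Your proposal is correct and follows the same route as the paper: both inclusions come from the correspondence $((v_1,v_2),\dots,(v_{n-1},v_n)) \leftrightarrow (v_1,\dots,v_n)$, under which distinct vertices in $L(G)$ are exactly distinct edges in $G$. Your handling of single-vertex paths is more careful than the paper's. The paper asserts that $pr$ is a bijection onto all of $P(G)$, which overlooks that each $(v)\in eAP(G)$ has no preimage, so restricting the identity to paths with at least one edge is the right repair. Your alternative of using the empty sequence does not quite work, since one empty sequence cannot be sent by $pr$ to $|V|$ different trivial paths.
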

\begin{proof}
``$\supseteq$'' Consider $p\in SP(L(G))$ and the corresponding $pr(p) \in pr(SP(L(G)))$ ($pr$ is a bijection between $P(L(G))$ and $P(G)$). Because $p = ((v_{1},v_{2}),\dots, (v_{n-1}, v_{n}))$ has no repeated vertices, $pr(p) = (v_{1}, \dots, v_{n})$ has no repeated edges, hence $pr(p) \in eAP(G)$.

``$\subseteq$'' Let $p \in eAP(G)$, then $p=(v_{1}, v_{2},\dots, v_{n})$ is without repeated edges. Consequently, $pr^{-1}(p) = ((v_{1}, v_{2}),(v_{2}, v_{3}),\dots, (v_{n-1}, v_{n}))$ has no repeated vertices, so  
$pr^{-1}(p) \in SP(L(G))$, and hence $p \in pr(SP(L(G)))$.
\end{proof}

As a result, to meet the criterion, it suffices to compute the line graph $L(G)$ of the input graph $G$ and then run the algorithm for simple paths. Note that by using recursive graph linearization $L(L(\dots(L(G))\ldots))$, we can provide further criteria that disallow repeating paths of length $n=2, 3, ...$.

\begin{algorithm}[!ht]
\caption{\textsc{e-AcyclicPathCoverage}}
\label{alg:eAccoverage}
\begin{algorithmic}[1]
\Require SESE $G = (V, E), k$ 
\Ensure test paths that achieve e-Acyclic Path Coverage (one by one)
\State $SPC \leftarrow \textsc{SimplePathCoverage}(L(G), k)$ 
\For{$p \in SPC$}
    \State \textbf{yield} $pr(p)$
\EndFor
\end{algorithmic}
\end{algorithm}

\section{Auxiliary listings and tables}

\label{sec:app-c}
\begin{algorithm}[!ht]
\caption{\textsc{Unblock}}
\label{alg:unblock}
\begin{algorithmic}[1]
\Require $u,BlockedSet, BlockedDict$
    \State remove $u$ from $BlockedSet$
    \ForAll{$w \in BlockedDict(u)$}
        \State remove $w$ from $BlockedDict(u)$
        \If{$w \in BlockedSet$}
            \State \Call{Unblock}{$w$}
        \EndIf
    \EndFor
\end{algorithmic}
\end{algorithm}

\begin{algorithm}[H] 
\caption{Ammann-Offutt approach from \cite{ammann2008}} \label{alg:ammann_offutt} \begin{algorithmic}[1] \Require $G = (V, E)$ \Comment a directed graph \Ensure $PP(G)$ \Comment a set of prime paths in $G$ \State $P' \gets V$ \Comment all paths $p$ in $G$ such that $|p|=0$ \State $T \gets \emptyset$ \Comment an auxiliary set of nonextendable paths \State $PP(G) \gets \emptyset$ \While{$P' \neq \emptyset$} \State remove an arbitrary path $p$ from $P'$ \If{$p$ is a simple path that is forward extendable} \State add to $P'$ all $(p, v) \in SP(G) \cup SC(G), v \in V$ \Else \State add $p$ to $T$ \EndIf \EndWhile \While{$T \neq \emptyset$} \State remove the longest path $p$ from $T$ \State add $p$ to $PP(G)$ \State remove from $T$ all subpaths of $p$ \EndWhile \State \Return $PP(G)$ \end{algorithmic} 
\end{algorithm}

\begin{algorithm}[H]
\caption{Improved Ammann-Offutt approach from \hyperlink{}{https://github.com/heshenghuan/Prime-Path-Coverage}}
\label{alg:ammann_offutt_improved}
\begin{algorithmic}[1]

\Require $G = (V, E)$ \Comment a directed graph
\Ensure $PP(G)$ \Comment the set of prime paths in $G$

\State $exPaths \gets \{ (v) \mid v \in V \}$ \Comment initial paths of length 1
\State $paths \gets \emptyset$

\Function{FindSimplePaths}{$G, exPaths, paths$}

    \ForAll{$p \in exPaths$}
        \If{$p$ is a prime path}
            \State add $p$ to $paths$
        \EndIf
    \EndFor

    \State $exPaths \gets \{ p \in exPaths \mid p$ is extendable in $G \}$

    \State $newExPaths \gets \emptyset$

    \ForAll{$p \in exPaths$}
        \ForAll{$v \in E(last(p))$}
            \If{$v \notin p$ \textbf{or} $v = first(p)$}
                \State add $(p, v)$ to $newExPaths$
            \EndIf
        \EndFor
    \EndFor

    \If{$newExPaths \neq \emptyset$}
        \State \Call{FindSimplePaths}{$G, newExPaths, paths$}
    \EndIf

    \State \Return $paths$
\EndFunction

\vspace{0.5em}

\State $SP \gets$ \Call{FindSimplePaths}{$G, exPaths, paths$}
\State $PP(G) \gets SP$

\State \Return $PP(G)$

\end{algorithmic}
\end{algorithm}

\begin{table}[ht]
\centering
\caption{Summary statistics of CFGs.}
\label{tab:cfg_stats}
\begin{tabular}{l r}
\hline
Characteristic & Value\\
\hline
Number of functions & 120,314 \\
DAG ratio & 31.36\% \\
\midrule
Average nodes per function & 30.46 \\
Average edges per function & 42.87 \\
Average SCCs per function & 21.88 \\
Average largest SCC size & 7.33 \\
Average degree & 1.37 \\
Average max in-degree & 5.63 \\
Average max out-degree & 2.59 \\
\midrule
Standard deviation of nodes & 55.27 \\
Standard deviation of edges & 77.27 \\
Standard deviation of SCCs per function & 42.27 \\
Standard deviation of largest SCC size & 19.57 \\
Standard deviation of degree & 0.20 \\
Standard deviation of max in-degree & 18.27 \\
Standard deviation of max out-degree & 1.08 \\
\midrule
Maximum nodes in a function & 5,402 \\
Maximum edges in a function & 8,441 \\
Maximum in-degree overall & 1,967 \\
Maximum out-degree overall & 15 \\
\midrule
Node count (25\% quantile) & 14 \\
Node count (50\% quantile / median) & 20 \\
Node count (75\% quantile) & 32 \\
Node count (90\% quantile) & 53 \\
Node count (99\% quantile) & 168 \\
\hline
\end{tabular}
\end{table}

\begin{figure}[!h]
    \centering
    \includegraphics[width=0.9\linewidth]{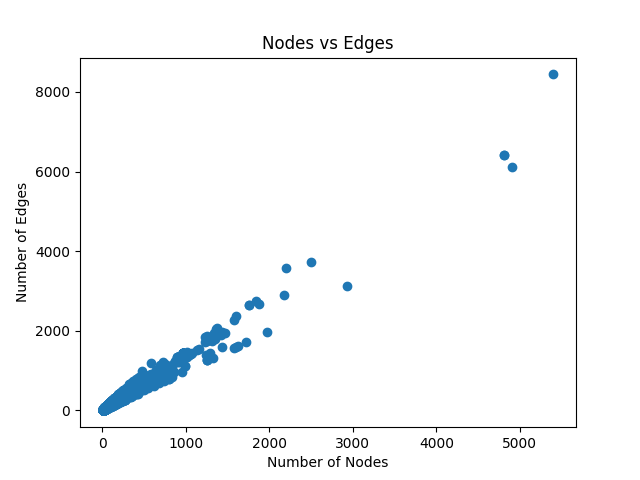}    
    \caption{The plot of nodes vs edges of the dataset}\label{fig:plot}
\end{figure}

\begin{table}[ht]
\centering
\caption{Summary of experiments for CFGs with less than 5000 prime paths (3629 instances), timeout = max(16*our\_time, 10min)}
\label{tab:results_not_many_prime_paths}
\begin{tabular}{lccccccc}
\toprule
& \multicolumn{2}{c}{Time(s)} & \multicolumn{2}{c}{Memory(MB)} & \multicolumn{2}{c}{Speedup} & Timeouts \\Method & mean & median & mean & median & mean & median & count \\
\midrule
Ammann-Offutt \cite{ammann2008} & 3.14 & 0.02 & 0.16 & 0.03 & - & - & 0 \\
Impr. Ammann-Offutt & 0.16 & \textbf{0.02} & 0.53 & \textbf{0.04} & 3.80 & 1.50 & 0 \\
Fazli \cite{Fazli2019} & 11.00 & 0.22 & 0.51 & 0.13 & 0.20 & 0.10 & 83 \\
Our approach & \textbf{0.15} & 0.04 & \textbf{0.15} & 0.13 & 11.63 & 0.68 & 0 \\
\bottomrule
\end{tabular}
\end{table}

\begin{table}[ht]
\centering
\caption{Summary of experiments for CFGs between 15 and 50 nodes (159 instances), timeout = max(16*our\_time, 10min)}
\label{tab:results_small}
\begin{tabular}{lccccccc}
\toprule
& \multicolumn{2}{c}{Time(s)} & \multicolumn{2}{c}{Memory(MB)} & \multicolumn{2}{c}{Speedup} & Timeouts \\Method & mean & median & mean & median & mean & median & count \\
\midrule
Ammann-Offutt \cite{ammann2008} & 311.24 & 252.12 & 4.19 & 4.16 & - & - & 114 \\
Impr. Ammann-Offutt & 42.03 & 10.63 & 234.14 & 59.50 & 31.72 & - & 2 \\
Fazli \cite{Fazli2019} & 90.89 & 81.65 & 3.98 & 4.40 & 7.28 & 7.15 & 156 \\
Our approach & \textbf{5.20} & \textbf{1.17} & \textbf{0.21} & \textbf{0.17} & 188.95 & - & 0 \\
\bottomrule
\end{tabular}
\end{table}

\begin{table}[ht]
\centering
\caption{Summary of experiments for CFGs between 51 and 100 nodes (41 instances), timeout = max(16*our\_time, 30min)}
\label{tab:results_medium}
\begin{tabular}{lccccccc}
\toprule
& \multicolumn{2}{c}{Time(s)} & \multicolumn{2}{c}{Memory(MB)} & \multicolumn{2}{c}{Speedup} & Timeout \\Method & mean & median & mean & median & mean & median & count \\
\midrule
Ammann-Offutt \cite{ammann2008} & 805.74 & 620.18 & 8.76 & 8.53 & - & - & 27 \\
Impr. Ammann-Offutt & 211.70 & 19.23 & 1003.82 & 86.35 & 40.59 & - & 3 \\
Fazli \cite{Fazli2019} & 577.83 & 451.55 & 89.45 & 11.74 & 2.25 & 0.44 & 34 \\
Our approach & \textbf{76.02} & \textbf{3.52} & \textbf{0.33} & \textbf{0.32} & 258.31 & - & 0 \\
\bottomrule
\end{tabular}
\end{table}

\begin{table}[ht]
\centering
\caption{Summary of experiments for CFGs between 101 and 200 nodes (19 instances), timeout = max(8*our\_time, 1h)}
\label{tab:results_big}
\begin{tabular}{lccccccc}
\toprule
& \multicolumn{2}{c}{Time(s)} & \multicolumn{2}{c}{Memory(MB)} & \multicolumn{2}{c}{Speedup} & Timeouts \\Method & mean & median & mean & median & mean & median & count \\
\midrule
Ammann-Offutt \cite{ammann2008} & 771.25 & 457.91 & 8.48 & 6.94 & - & - & 16 \\
Impr. Ammann-Offutt & 476.65 & 106.14 & 1563.01 & 314.89 & 16.23 & - & 6 \\
Fazli \cite{Fazli2019} & 853.78 & 433.62 & 58.34 & 13.77 & 0.66 & 0.93 & 16 \\
Our approach & \textbf{279.25} & \textbf{21.45} & \textbf{0.48} & \textbf{0.47} & 99.01 & - & 0 \\
\bottomrule
\end{tabular}
\end{table}

\begin{table}[ht]
\centering
\caption{Summary of experiments for CFGs between 201 and 1000 nodes (7 instances), timeout = max(4*our\_time, 2h)}
\label{tab:results_very_big}
\begin{tabular}{lccccccc}
\toprule
& \multicolumn{2}{c}{Time(s)} & \multicolumn{2}{c}{Memory(MB)} & \multicolumn{2}{c}{Speedup} & Timeouts \\Method & mean & median & mean & median & mean & median & count \\
\midrule
Ammann-Offutt \cite{ammann2008}  &-&- & -& - & - & - & 7 \\
Impr. Ammann-Offutt & 2634.56 & 843.12 & 3747.16 & 2535.23 & - & - & 0 \\
Fazli \cite{Fazli2019} &-&- & -& - & - & - & 7 \\
Our approach & \textbf{157.80} & \textbf{38.66} & \textbf{1.86} & \textbf{0.85} &-& - & 0 \\
\bottomrule
\end{tabular}
\end{table}

\begin{table}[ht]
\centering
\caption{Summary of experiments for 2 best algorithms for CFGs between 15 and 150 nodes  (555 instances), timeout = max(3*our\_time, 10m)}
\label{tab:results_two_best}
\begin{tabular}{lccccccc}
\toprule
& \multicolumn{2}{c}{Time(s)} & \multicolumn{2}{c}{Memory(MB)} & \multicolumn{2}{c}{Speedup} & Timeouts \\Method & mean & median & mean & median & mean & median & count \\
\midrule
Impr. Ammann-Offutt & 73.37 & 19.45 & 357.04 & 88.72 & - & - & 68 \\
Our approach & \textbf{26.75} & \textbf{3.12} & \textbf{0.30} & \textbf{0.30} & - & - & 0 \\
\bottomrule
\end{tabular}
\end{table}

\begin{table}[ht]
\centering
\caption{Experiments for $D_n$ graphs, timeout 5h}
\label{tab:results_diamond}
\begin{tabular}{c cc cc cc cc}
\toprule
& \multicolumn{2}{c}{Our approach} 
& \multicolumn{2}{c}{Impr. Ammann-Offutt} 
& \multicolumn{2}{c}{Fazli} 
& \multicolumn{2}{c}{Ammann-Offutt} \\

Size 
& time & mem 
& time & mem 
& time & mem 
& time & mem \\
\midrule

1  & 0.0027 & 0.0400 & \textbf{0.0010} & \textbf{0.0148} & 0.0022 & 0.0212 & \textbf{0.0010} & \textbf{0.0148} \\
2  & 0.0080 & 0.0533 & 0.0115 & 0.0155 & \textbf{0.0044} & 0.0389 & 0.0052 & \textbf{0.0150} \\
3  & 0.0042 & 0.0622 & 0.0023 & 0.0193 & 0.0107 & \textbf{0.0100} & \textbf{0.0021} & \textbf{0.0100} \\
4  & 0.0187 & \textbf{0.0100} & 0.0041 & 0.0289 & 0.0317 & 0.0902 & \textbf{0.0029} & 0.0184 \\
5  & \textbf{0.0065} & 0.0797 & 0.0078 & 0.0536 & 0.1435 & 0.1256 & 0.0080 & \textbf{0.0259} \\
6  & \textbf{0.0082} & 0.0973 & 0.0433 & 0.1154 & 0.6822 & 0.2397 & 0.0322 & \textbf{0.0437} \\
7  & \textbf{0.0334} & 0.1085 & 0.1121 & 0.2578 & 3.4360 & 0.4732 & 0.1200 & \textbf{0.0872} \\
8  & \textbf{0.0174} & \textbf{0.1160} & 0.0865 & 0.5707 & 10.2762 & 0.9501 & 0.4742 & 0.1790 \\
9  & \textbf{0.0967} & \textbf{0.1198} & 0.1820 & 1.2633 & 42.7592 & 1.9305 & 1.6172 & 0.3857 \\
10 & \textbf{0.0515} & \textbf{0.1214} & 0.4289 & 2.7821 & 161.0710 & 3.9514 & 6.0426 & 0.8055 \\
11 & \textbf{0.2872} & \textbf{0.1240} & 1.1380 & 6.0425 & 782.5986 & 8.1002 & 33.8452 & 1.7276 \\
12 & \textbf{0.1996} & \textbf{0.1373} & 2.4169 & 13.0652 & 3364.0473 & 16.6407 & 134.5254 & 3.7125 \\
13 & \textbf{0.4726} & \textbf{0.1432} & 4.6152 & 28.1320 & 15128.6997 & 34.2088 & 500.9064 & 7.9184 \\
14 & \textbf{0.7060} & \textbf{0.1579} & 9.6191 & 60.2659 & -- & 70.2951 & 2249.0866 & 16.8377 \\
15 & \textbf{1.8509} & \textbf{0.1727} & 20.3505 & 128.5302 & -- & 144.3377 & 9864.6615 & 35.6894 \\

\bottomrule
\end{tabular}
\end{table}

\begin{table}[ht]
\centering
\caption{Experiments for extended $D_n$ graphs, timeout 5h}
\label{tab:results_diamond_extended}
\begin{tabular}{c cc cc cc cc}
\toprule
& \multicolumn{2}{c}{Our approach} 
& \multicolumn{2}{c}{Impr. Ammann-Offutt} 
& \multicolumn{2}{c}{Fazli} 
& \multicolumn{2}{c}{Ammann-Offutt} \\

Size 
& time & mem 
& time & mem 
& time & mem 
& time & mem \\
\midrule

1  & 0.0178 & \textbf{0.0100} & 0.0315 & 0.0160 & 0.0160 & 0.0384 & \textbf{0.0037} & 0.0152 \\
2  & 0.0208 & \textbf{0.0100} & \textbf{0.0027} & \textbf{0.0100} & 0.0534 & \textbf{0.0100} & 0.0123 & \textbf{0.0100} \\
3  & 0.0190 & 0.0960 & \textbf{0.0057} & 0.0406 & 0.3722 & 0.1977 & 0.0144 & \textbf{0.0263} \\
4  & 0.0126 & 0.1102 & \textbf{0.0096} & \textbf{0.0100} & 0.3374 & 0.2010 & 0.0441 & \textbf{0.0100} \\
5  & 0.0607 & 0.1274 & \textbf{0.0125} & 0.0903 & 0.3240 & 0.2061 & 0.0577 & \textbf{0.0457} \\
6  & 0.0369 & 0.1587 & \textbf{0.0349} & 0.1725 & 0.3364 & 0.2100 & 0.0721 & \textbf{0.0681} \\
7  & \textbf{0.0625} & \textbf{0.1424} & 0.0987 & 0.3653 & 0.3807 & 0.2207 & 0.2587 & 0.1216 \\
8  & \textbf{0.0402} & \textbf{0.1583} & 0.1210 & 0.7615 & 0.5023 & 0.2592 & 0.9374 & 0.2520 \\
9  & \textbf{0.0501} & \textbf{0.1680} & 0.4500 & 1.6550 & 1.4077 & 0.4725 & 3.0341 & 0.4984 \\
10 & \textbf{0.0696} & \textbf{0.1778} & 0.6884 & 3.6028 & 3.9397 & 0.7988 & 9.7653 & 1.0485 \\
11 & \textbf{0.1956} & \textbf{0.1939} & 1.8770 & 7.7850 & 15.9820 & 1.7100 & 34.8094 & 2.2267 \\
12 & \textbf{0.2461} & \textbf{0.2007} & 3.6052 & 16.7890 & 55.7572 & 3.3379 & 151.6572 & 4.7626 \\
13 & \textbf{0.3914} & \textbf{0.2189} & 7.5845 & 36.0381 & 324.9300 & 6.8299 & 681.4616 & 10.1249 \\
14 & \textbf{0.8561} & \textbf{0.2303} & 11.4642 & 77.0805 & 1119.4746 & 13.6114 & 2963.1393 & 21.5068 \\
15 & \textbf{2.7371} & \textbf{0.2223} & 30.2840 & 164.1732 & 5253.7572 & 28.2040 & 13626.8657 & 45.5289 \\
16 & \textbf{3.6339} & \textbf{0.2322} & 55.8439 & 348.4011 & -- & 41.4446 & -- & 96.1081 \\

\bottomrule
\end{tabular}
\end{table}

\begin{table}[t]
\caption{Per-instance statistics of inter-output delays across benchmark functions. Mean, coefficient of variation (CV), median (p50), 95th percentile (p95), maximum delay (Max), and linear trend of delay over time are reported.}
\label{tab:delay_stats}
\centering
\small
\begin{tabular}{lrrrrrr}
\hline
Instance & Mean & CV & p50 & p95 & Max & Trend slope \\
\hline
ggml\_metal\_encode\_node & 3.83 & 0.087 & 3.76 & 4.39 & 6.29 & 0.00023 \\
ggml\_metal\_init (1) & 9.05 & 0.138 & 9.02 & 10.73 & 16.18 & -0.00104 \\
catch\_test\_84 & 14.11 & 0.137 & 13.88 & 16.05 & 37.11 & 0.00268 \\
catch\_test\_204 & 12.24 & 0.140 & 12.54 & 14.39 & 23.54 & -0.00640 \\
doctest\_anon\_func & 37.01 & 0.123 & 35.77 & 45.45 & 68.02 & -0.00916 \\
model\_writer\_save & 0.22 & 0.464 & 0.20 & 0.28 & 2.16 & -0.00008 \\
show\_demo\_widgets (1) & 13.65 & 0.149 & 13.19 & 16.54 & 30.51 & -0.00153 \\
ggml\_metal\_init (2) & 9.58 & 0.084 & 9.43 & 11.12 & 15.60 & -0.00180 \\
ggml\_metal\_encode\_node (2) & 1.99 & 0.097 & 1.97 & 2.26 & 3.98 & 0.00010 \\
show\_demo\_widgets (2) & 8.25 & 0.085 & 8.11 & 9.55 & 13.26 & 0.00181 \\
hashes\_test\_hscan & 9.22 & 0.079 & 9.05 & 10.67 & 15.13 & 0.00041 \\
keys\_scan\_all & 10.40 & 0.073 & 10.24 & 11.90 & 14.34 & 0.00060 \\
keys\_pattern\_match & 9.62 & 0.143 & 9.21 & 11.82 & 20.02 & 0.00243 \\
lists\_lrem\_test & 9.62 & 0.141 & 9.26 & 11.97 & 18.89 & -0.00298 \\
lists\_rpoplpush & 9.20 & 0.105 & 8.96 & 10.88 & 15.92 & 0.00113 \\
sets\_sscan & 8.92 & 0.080 & 8.78 & 10.23 & 13.32 & 0.00123 \\
zsets\_zscan & 9.28 & 0.073 & 9.17 & 10.36 & 15.05 & 0.00013 \\
mime\_infer & 3.85 & 0.143 & 3.75 & 4.12 & 7.23 & -0.00823 \\
xroonode\_draw & 1.40 & 1.219 & 0.19 & 3.97 & 7.96 & 0.00881 \\
core\_smooth\_test & 7.55 & 0.096 & 7.40 & 8.77 & 14.48 & -0.00011 \\
ast\_builder & 1.94 & 0.613 & 1.75 & 1.94 & 9.72 & -0.02002 \\
regex\_test & 9.53 & 0.080 & 9.39 & 10.89 & 14.99 & 0.00068 \\
aes\_dec & 0.36 & 1.080 & 0.32 & 0.47 & 8.41 & -0.00021 \\
aes\_enc & 0.35 & 1.104 & 0.32 & 0.42 & 8.81 & -0.00014 \\
show\_demo\_widgets (3) & 13.52 & 0.259 & 12.55 & 20.51 & 35.94 & 0.00606 \\
\hline
\end{tabular}
\end{table}

\end{document}